\DeclareRobustCommand\mytikzmark{\raisebox{2pt}{\tikz \draw[very thick,dashed, red] (0,0) -- (5mm,0);} }
\title{A randomized polynomial kernel for Subset Feedback Vertex Set}
\author{Eva-Maria C. Hols}
\author{Stefan Kratsch}
\affil{University of Bonn, Germany, \{hols,kratsch\}@cs.uni-bonn.de}
\theoremstyle{plain}
\newtheorem{claim}{Claim}
\newtheorem{theorem}{Theorem}
\newtheorem{definition}{Definition}
\newtheorem{lemma}{Lemma}
\theoremstyle{remark}
\newcommand{\prob}[1]{\textsc{\lowercase{#1}}}
\newcommand{\problem}[1]{\prob{#1}\xspace}
\newcommand{\Oh}{\mathcal{O}}
\newcommand{\cP}{\mathcal{P}}
\newcommand{\ESFVS}{\prob{Edge Subset FVS}\xspace}
\newcommand{\PESFVS}{\prob{Pair-Constrained Edge Subset FVS}\xspace}
\newcommand{\SFVS}{\prob{Subset FVS}\xspace}
\newcommand{\FVS}{\prob{FVS}\xspace}
\newcommand{\DFVS}{\prob{DFVS}\xspace}
\newcommand{\DTMWC}{\prob{Deletable Terminal Multiway Cut}\xspace}
\newcommand{\MWC}{\prob{Multiway Cut}\xspace}
\newcommand{\torso}{\mathrm{torso}}
\newcommand{\problembox}[4]{
\begin{center}
\framebox{
    \begin{minipage}{0.95\textwidth}
    \problem{#1}\hfill \textbf{Parameter:} #2 \\
    \textbf{Input:} #3 \\
    \textbf{Question:} #4
\end{minipage}
}
\end{center}
}
\newcounter{rule}
\def\namedlabel#1#2{\begingroup
    #2%
    \def\@currentlabel{#2}%
    \phantomsection\label{#1}\endgroup
}
\begin{document}

\maketitle

\begin{abstract}
The \problem{Subset Feedback Vertex Set} problem generalizes the classical \problem{Feedback Vertex Set} problem and asks, for a given undirected graph $G=(V,E)$, a set $S \subseteq V$, and an integer $k$, whether there exists a set $X$ of at most $k$ vertices such that no cycle in $G-X$ contains a vertex of $S$. It was independently shown by Cygan et al.\ (ICALP '11, SIDMA '13) and Kawarabayashi and Kobayashi (JCTB '12) that \problem{Subset Feedback Vertex Set} is fixed-parameter tractable for parameter $k$. Cygan et al.\ asked whether the problem also admits a polynomial kernelization.

We answer the question of Cygan et al.\ positively by giving a randomized polynomial kernelization for the equivalent version where $S$ is a set of edges. . 
In a first step we show that \problem{Edge Subset Feedback Vertex Set} has a randomized polynomial kernel parameterized by $|S|+k$ with $\Oh(|S|^2k)$ vertices. For this we use the matroid-based tools of Kratsch and Wahlstr\"om (FOCS '12) that for example were used to obtain a polynomial kernel for \problem{$s$-Multiway Cut}. Next we present a preprocessing that reduces the given instance $(G,S,k)$ to an equivalent instance $(G',S',k')$ where the size of $S'$ is bounded by $\Oh(k^4)$. 
These two results lead to a polynomial kernel for \problem{Subset Feedback Vertex Set} with $\Oh(k^9)$ vertices.
\end{abstract}


\section{Introduction}

In the \problem{Subset Feedback Vertex Set} (\SFVS) problem we are given an undirected graph $G=(V,E)$, a set of vertices $S\subseteq V$, and an integer $k$, and have to determine whether there is a set $X$ of at most $k$ vertices that intersects all cycles that contain at least one vertex of $S$. Clearly, because we can choose $S=V$, this is a generalization of the well-studied \problem{Feedback Vertex Set} (\FVS) problem where, given $G$ and $k$, we have to determine whether some set $X$ of at most $k$ vertices intersects \emph{all cycles} in $G$.
\problem{Feedback Vertex Set} has been extensively studied in parameterized complexity: It is known to be fixed-parameter tractable (FPT) with parameter $k$, i.e., solvable in time $f(k)\cdot |V|^c$, and after a series of improvements the fastest known algorithms take deterministic time $\Oh^*(3.619^k)$ \cite{KociumakaP14} and randomized time $\Oh^*(3^k)$ \cite{CyganNPPRW11}. It is also known to admit a \emph{polynomial kernelization}~\cite{BurrageEFLMR06}, i.e., there is an efficient algorithm that reduces any instance $(G,k)$ of \FVS to an equivalent instance of size polynomial in $k$; the best known kernelization creates an equivalent instance with $\Oh(k^2)$ vertices~\cite{DBLP:journals/talg/Thomasse10}.

In 2011, Cygan et al.~\cite{DBLP:journals/siamdm/CyganPPW13} and Kawarabayashi and Kobayashi~\cite{DBLP:journals/jct/KawarabayashiK12} independently showed that \SFVS is FPT.
The algorithm of Cygan et al.\ runs in time $2^{\Oh(k \log k)} n^{\Oh(1)}$, while the one of Kawarabayashi and Kobayashi runs in time $\Oh(f(k) \cdot n^2m)$. Wahlstr\"om~\cite{DBLP:conf/soda/Wahlstrom14} then gave the first single-exponential algorithm with running time $4^k \cdot n^{\Oh(1)}$; an algorithm with subexponential dependence on $k$ is ruled out under the Exponential-Time Hypothesis (e.g., because \SFVS generalizes \problem{Vertex Cover}). More recently, Lokshtanov et al.~\cite{DBLP:conf/icalp/LokshtanovRS15} gave algorithms with deterministic time $2^{\Oh(k \log k)} \cdot (n+m)$ and randomized time $\Oh(25.6^k \cdot (n+m))$. 

Cygan et al.~\cite{DBLP:journals/siamdm/CyganPPW13} ask whether the \SFVS problem also admits a polynomial kernelization and suggest that the matroid-based tools of Kratsch and Wahlstr\"om~\cite{DBLP:conf/focs/KratschW12} could be applicable. The latter work uses representative sets of independent sets in matroids to obtain, amongst others, polynomial kernels for \problem{$s$-Multiway Cut} and \DTMWC (\problem{DTMWC}) with $\Oh(k^{s+1})$ and $\Oh(k^3)$ vertices, respectively. In \problem{Multiway Cut} we are given a graph $G=(V,E)$, a set $T\subseteq V$ of terminals, and an integer $k$ and have to determine whether deletion of at most $k$ non-terminal vertices separates all terminals. In \problem{$s$-Multiway Cut} the terminal set has size at most $s$, and in \problem{DTMWC} we are also allowed to delete terminals (which is essentially the same as restricting terminals to be degree one).

Interestingly, Cygan et al.~\cite{DBLP:journals/siamdm/CyganPPW13} also provide a polynomial-time reduction from \MWC to \SFVS that does not change the parameter value and, hence, is known to imply that \SFVS is at least as hard as \MWC regarding existence of polynomial kernels.
Accordingly, \MWC would be the natural next target problem for attempting to find a polynomial kernelization (after \problem{$s$-Multiway Cut} and \DTMWC).
It appears, however, that the reduction of Cygan et al.\ is from \DTMWC rather than from the more general \MWC, and it is not obvious whether similar ideas could yield a reduction from \MWC to \SFVS.

\subparagraph{Our work.}
We apply the matroid-based tools of Kratsch and Wahlstr\"om~\cite{DBLP:conf/focs/KratschW12} and develop a randomized polynomial kernelization that reduces instances $(G,S,k)$ of \SFVS to equivalent instances with at most $\Oh(k^9)$ vertices; this is our main result. Similarly to Cygan et al.~\cite{DBLP:journals/siamdm/CyganPPW13} we also work on \ESFVS where $S$ is a set of edges of $G$ and $X$ needs to intersect all cycles that contain at least one edge of $S$; \ESFVS and \SFVS are equivalent~\cite{DBLP:journals/siamdm/CyganPPW13}. The result is obtained in two parts.

In the first part (Section \ref{section:kernelization}) we establish a randomized polynomial kernelization for \ESFVS parameterized by $|S|+k$ that reduces to equivalent instances with at most $\Oh(|S|^2k)$ vertices. Note that nontrivial instances have $k<|S|$ since one could otherwise remove $S$ by deleting one endpoint of each edge in $S$. Thus, parameterization by $|S|$ suffices, but $\Oh(|S|^2k)$ gives a tighter overall bound than $\Oh(|S|^3)$.

At high level, this part is similar to the polynomial kernelization for \DTMWC. We show that certain solutions $X$, later called \emph{dominant} solutions, allow particular path packings in the underlying graph $G$. For \DTMWC this is achieved by a fairly simple replacement argument for solutions $X$ that are not sufficiently well connected to connected components of $G-X$. For \ESFVS the endpoints $T=V(S)$ of edges in $S$ can be regarded as terminals, but this gives a different separation property: Solutions $X$ need not generate many connected components in $G-X$ since only $S$-cycles need to be prevented, and components may contain many vertices of $T$. Rather, in $G-X$ there must be a tree-like (or forest-like) structure with  components without $S$-edges playing the role of nodes and with edges given by $S$. Nevertheless, using the tree-like structure, a replacement argument can be found, implying that dominant solutions must create many components in $(G-X)-S$ containing vertices of $T$ and be well connected to them. This allows to set up a gammoid on $G-S$ with sources $T$ and apply, as in~\cite{DBLP:conf/focs/KratschW12}, a result of Lov\'asz~\cite{lovasz1977flats} (made algorithmic by Marx~\cite{DBLP:journals/tcs/Marx09}) on representative sets in (linear) matroids that is then guaranteed to generate a superset of $X$. Randomization is only needed to generate a matrix representation for the gammoid.

In the second part (Section \ref{section:reducing}) we give a (deterministic) polynomial-time preprocessing that, given an instance $(G,S,k)$ of \ESFVS, returns an equivalent instance $(G',S',k')$ with $k'\leq k$ and $|S'|\in\Oh(k^4)$. Together with the randomized kernelization from the first part this implies the claimed randomized kernelization to $\Oh(k^9)$ vertices.

A reduction of the number of $S$-edges is also a crucial ingredient in the FPT algorithm for \ESFVS by Cygan et al.~\cite{DBLP:journals/siamdm/CyganPPW13}. They achieve $|S|\in\Oh(k^3)$, but it is in a slightly more favorable setting: Using iterative compression, it suffices to solve the task of finding a solution $X'$ of size $k$ when given a solution $X$ of size $k+1$. (This is well known in parameterized complexity, and we prefer not to repeat it here.) Considering some unknown solution $X'$ of size $k$, one can guess the intersection $D$ of $X'$ with $X$, by trying all $\Oh(2^{k+1})$ possibilities. For the correct guess $D=X'\cap X$, the remaining problem is to find for $(G-D,S\setminus D,k-|D|)$ a solution $Z'$ of size at most $k-|D|$ that is disjoint from $Z=X\setminus D$, since $Z'=X'\setminus D$ would be such a solution; here $S \setminus D$ denotes the set of edges in $S$ with no endpoint in $D$. Cygan et al.\ make the nice observation that the guessing also allows to assume that there is no other solution $X'$ with an even larger intersection with $X$.

In contrast, we cannot afford to run iterative compression for a kernelization to get a starting solution of size $k+1$ and, as is common, we have to start with an approximate solution $Z$, which can be assumed to be of size at most $8k$ using an $8$-approximation algorithm of Even et al.~\cite{DBLP:journals/siamcomp/EvenNZ00}. The idea of guessing the intersection of an optimal solution with $Z$ is infeasible regarding both time and the number of created instances. Thus, while several structures like $z$-flowers or disjoint $x$,$y$-paths containing $S$-edges appear in both approaches, many things have to be handled differently. For example, having $k+2$ disjoint $x$,$y$-paths containing $S$-edges for $x,y\in Z$ implies that one of $x$ and $y$ must be in every solution of size $k$; Cygan et al.\ can stop here because the solution would not be disjoint from $Z$; we need to instead store the information about $x$ and $y$ to later detect $S$-edges that can be safely removed. 
Like Cygan et al., we also use Gallai's $A$-path Theorem but we avoid the 2-expansion lemma by using the properties of a blocking set of size at most $2k$ differently. (Such a blocking set can be found if certain flowers of order $k+1$ do not exist, using Gallai's $A$-path Theorem.)
Cygan et al. compute a blocking set $B$ of size at most $3k$ to find an $F$-flower of order $|X|$ (with $F \subseteq V$ outer-abundant; see \cite[Definition 3.4]{DBLP:journals/siamdm/CyganPPW13}) under the assumption that certain $F$-flowers of order $k+1$ do not exist and show that there exists a solution that contains $X$ (under the assumption that there exists a solution that is disjoint from $F$). We cannot assume that our solution is disjoint from $F$ and have to take another approach. Moreover, we observe that $z$-flowers can be found via matroid parity on an appropriate gammoid.\footnote{The latter is deterministic by applying a specialized matroid parity algorithm due to Tong et al.~\cite{TongLV1984}.}

\section{Preliminaries}\label{section:preliminaries}

We use standard graph notation, mostly following Diestel~\cite{Diestel2005graph}. All graphs are undirected and may contain multi-edges and loops; accordingly, they may contain cycles of length one and two (formed by loops and multi-edges, respectively.)
An edge $e \in E$ is called a \emph{bridge} if $(V,E \setminus \{e\})$ has more connected components than $G$.
For a set $X \subseteq V$, let $G[X]$ denote the subgraph of $G$ induced by $X$ and let $N_G(X)$ denote the neighborhood of $X$ in $G$, i.e., $N_G(X)=\left\{ v \in V \setminus X \mid \exists u \in X \colon \{u,v\} \in E \right\}$.
Given two sets $X,Y \subseteq V$, by $E(X,Y)$ we denote the set of edges that have one endpoint in $X$ and one endpoint in $Y$.
For a set $E' \subseteq E$ of edges let $V(E')$ be the set of vertices that are incident with at least one edge in $E'$.
For $X\subseteq V$ and $F\subseteq E$ we shorthand $G-X$ for $G[V\setminus X]$ and $G-F$ for $(V(G),E(G)\setminus F)$; if $X=\{x\}$ then we may also write $G-x$ instead of $G - \{x\}$. Note that the graph $(G-X)-F$ is the same graph as the graph $(G-F)-X$ and we will drop the parentheses.

For $A\subseteq V$ a path with endpoints in $A$ and internal vertices not in $A$ is called an \emph{$A$-path}. The following theorem about $A$-paths was already used by Cygan et al.~\cite{DBLP:journals/siamdm/CyganPPW13} for \SFVS and in the quadratic kernelization for \problem{Feedback Vertex Set} by Thomass\'e~\cite{DBLP:journals/talg/Thomasse10}.

\begin{theorem}[Gallai~\cite{gallai1961maximum}] \label{theorem::gallai}
    Let $A \subseteq V$ and $k \in \mathbb{N}$. If the maximum number of vertex-disjoint $A$-paths is 
    strictly less than $k+1$, then there exists a set $B \subseteq V$ of at most $2k$ vertices that intersect every $A$-path.
\end{theorem}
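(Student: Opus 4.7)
The plan is to derive Gallai's theorem from Edmonds' non-bipartite matching theorem via an auxiliary-graph reduction, which is the classical route. Given $G$ and $A \subseteq V(G)$, I would construct a graph $G^*$ by attaching to each $a \in A$ a short pendant gadget ending in a fresh vertex $a^\bullet$. The gadget is designed so that any matching of $G^*$ that saturates the pendants $a^\bullet$ necessarily pairs them up via internally-vertex-disjoint $A$-paths routed through $G$; hence the matching deficiency at the pendant set is exactly $|A|$ minus twice the maximum size of a vertex-disjoint packing of $A$-paths in $G$.

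The next step is to invoke the Tutte-Berge formula on $G^*$. Assuming the maximum $A$-path packing in $G$ has size at most $k$, at least $|A|-2k$ of the pendants are unsaturated in every maximum matching of $G^*$. Tutte-Berge then produces a set $U \subseteq V(G^*)$ whose deletion leaves at least $|A|-2k$ odd components, each of which must miss all but at most one pendant. A counting argument (playing the number of ``excess'' odd components against the saturated pendants) shows one can take $|U \cap V(G)| \leq 2k$, and this set $B := U \cap V(G)$ intersects every $A$-path in $G$: an $A$-path in $G-B$ would, when combined with the gadget pendants at its endpoints, yield an augmenting structure in $G^*$ contradicting the deficiency witnessed by $U$.

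The main obstacle is the correctness and tightness of the gadget translation in both directions: one must verify that packings of $A$-paths in $G$ correspond to matchings in $G^*$ of the claimed size, and conversely that a Tutte-Berge certificate in $G^*$ restricts to a hitting set in $G$ of size at most $2k$ rather than something larger. This is delicate because internal vertices of an $A$-path lie outside $A$, yet the matching must route through them in a coordinated way; careful choice of the pendant gadget (for instance a short path rather than a single edge, to enforce parity) handles this. Once the reduction is in place, the bound $|B| \leq 2k$ is an immediate consequence of Tutte-Berge applied to the deficient auxiliary graph, and the standard Gallai-Edmonds structure theorem makes the argument entirely algorithmic.
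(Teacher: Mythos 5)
The paper does not prove Theorem~\ref{theorem::gallai}: it cites it as a classical result of Gallai and points to Schrijver's proof for an algorithmic version. So there is no in-paper proof to compare against, and you are attempting to reprove a result the paper takes as given.

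Your proposed reduction has a genuine gap at its core. Attaching a pendant gadget only to the vertices of $A$ cannot make a maximum matching of $G^*$ encode a packing of $A$-paths. A matching, restricted to the original graph $G$, is just a set of pairwise disjoint edges; it has no mechanism to ``route through'' internal vertices of an $A$-path, no matter how the pendant at each $a\in A$ is shaped. Enforcing parity via a longer pendant path does not help, because the obstruction is not parity at $A$ but the absence of any structure forcing matching edges inside $G$ to chain into paths. A concrete counterexample to your claimed identity (``deficiency at the pendant set equals $|A|$ minus twice the maximum $A$-path packing'') is $G=K_{1,3}$ with $A$ the three leaves: the maximum $A$-path packing has size $1$, so the formula predicts a pendant deficiency of $3-2=1$, yet with either a length-$1$ or length-$2$ pendant one can saturate all three pendants while matching the center to one leaf. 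The standard Gallai reduction works differently: one duplicates every vertex $v\in V\setminus A$ into $v',v''$ joined by an edge, and lifts each edge of $G$ to all pairs of copies; then a matching can alternate between inter-vertex edges and the internal edges $v'v''$, which is exactly what encodes a path traversal. Deleting or modifying only the $A$-side of the construction cannot produce this alternation. Consequently the subsequent Tutte--Berge / Gallai--Edmonds analysis, while the right kind of tool, is being applied to a graph that does not have the required correspondence, so the $2k$ bound does not follow from the argument as written.
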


In particular it is possible to find either $(k+1)$-disjoint $A$-paths or a set $B$ that intersects all
$A$-paths in polynomial time. This follows from Schrijver's proof of Gallai's theorem~\cite{DBLP:journals/jct/Schrijver01}.

Let $(G,S,k)$ be an instance of the \ESFVS problem. We call a cycle $C$ an \emph{$S$-cycle}, if
at least one edge of $S$ is contained in $C$. Let $x$ be a vertex of $V$. A set $\{C_1, C_2, \ldots, C_t\}$ of 
$S$-cycles that contain $x$ is called an \emph{$x$-flower} of order $t$, if the sets of vertices 
$C_i \setminus \{x\}$ are pairwise disjoint. 
Note that if there exists a $x$-flower of order at least $k+1$, then the vertex $x$ must be in 
every solution for $(G,S,k)$, if one exists.
A set $B \subseteq V \setminus \{x\}$ of size $t$ is called an \emph{$x$-blocker} of size $t$, 
if each $S$-cycle through $x$ also contains at least one vertex of $B$.

\subparagraph{Parameterized complexity.} 
A \emph{parameterized problem} is a language $\mathcal{Q} \subseteq \Sigma^* \times \mathbb{N}$, where 
$\Sigma$ is any finite set. The second component of an instance $(x,k)$ is called the \emph{parameter}.
We say that a parameterized problem $\mathcal{Q}$ is \emph{fixed-parameter tractable} (FPT) if there exists a computable function $f \colon \mathbb{N} \to \mathbb{N}$ and an algorithm $A$ that on input of $(x,k)\in\mathcal{Q}\times\Sigma^*$ takes time at most $f(k)\cdot |x|^{\Oh(1)}$ and correctly decides whether $(x,k) \in \mathcal{Q}$.
A \emph{kernelization} of a parameterized problem $\mathcal{Q}$ is an algorithm $K$ that on input of
$(x,k) \in \Sigma^* \times \mathbb{N}$ takes time polynomial in $|x|+k$ and returns an equivalent instance
$(x',k') \in \Sigma^* \times k$ with $|x'|+k' \leq h(k)$, where $h$ is a computable function. The function
$h$ is called the size of the kernel. We say that $K$ is a \emph{polynomial kernelization} if $h(k) \in \Oh(k^c)$ for some 
constant $c$.
The polynomial kernelization obtained in this paper is randomized, which means that there is a small chance for the reduced instance to not be equivalent to the input. The error probability can be made exponentially small in the input size without increasing the size of the kernelization. Similarly to previous work~\cite{DBLP:conf/focs/KratschW12}, the only source for error is the need to compute a matrix representation for a particular matroid (preliminaries on matroids follow below).

\subparagraph{Matroids, gammoids, and representative sets.}

A matroid $M=(U, \mathcal{I})$ consists of a finite set $U$
and a family $\mathcal{I}$ of subsets of $U$, called \emph{independent sets}, fulfilling the following
properties:
\begin{inparaenum}[(i)]
    \item $\emptyset \in \mathcal{I}$;
    \item if $X \subseteq Y$ and $Y \in \mathcal{I}$ then also $X \in \mathcal{I}$; and
    \item if $X,Y \in \mathcal{I}$ with $|X| < |Y|$ then there exists $y \in Y \setminus X$ such that 
            $X \cup \{y\} \in \mathcal{I}$.
\end{inparaenum}
The \emph{rank} of of a matroid $M$, denoted by $r(M)$, is the size of the largest independent set of 
the matroid $M$.

Let $A$ be a matrix over an arbitrary field $F$. Let $U$ be the set of columns of $A$ and let 
$\mathcal{I}$ be the family of all sets $X \subseteq U$ of columns that are linearly independent over $F$.
Then $M=(U,\mathcal{I})$ is a matroid, called the \emph{linear matroid} or \emph{vector matroid} of $A$, and we say that $A$ \emph{represents} $M$. If $M=(U,\mathcal{I})$ is representable over some field, then it is also representable by an $r(M)\times |U|$ matrix; by Gaussian elimination we can always reduce a representing matrix for $M$ to one with $r(M)$ many rows (cf.~\cite{DBLP:journals/tcs/Marx09}).
Let $M_1=(U_1,\mathcal{I}_1)$ and $M_2=(U_2,\mathcal{I}_2)$ be two matroids with $U_1 \cap U_2 = \emptyset$.
The \emph{direct sum} $M_1 \oplus M_2$ is a matroid over $U=U_1 \cup U_2$ with independent sets
$\mathcal{I} = \{X \subseteq U \mid X \cap U_1 \in \mathcal{I}_1, X \cap U_2 \in \mathcal{I}_2 \}$.
If $A_1$ and $A_2$ represent the two matroids over the same field $F$, then matrix 
$A=\mathrm{diag}(A_1,A_2)$ represents $M_1 \oplus M_2$.

Let $G=(V,E)$ be a graph that may have both directed and undirected edges and let $S \subseteq V$.  
A set $T \subseteq V$ is \emph{linked} to $S$ if there exist $|T|$ vertex-disjoint paths from $S$ to $T$.
Thus every vertex in $T$ is endpoint of a different path from $S$.
It holds that $M=(U, \mathcal{I})$, where $U \subseteq V$ and $\mathcal{I}$ contains all sets $T \subseteq U$ that are linked to $S$ in $G$, is a matroid~\cite{Perfect1968}. The matroid $M$ is also called the \emph{gammoid} on $G$ with sources $S$ and ground set $U$; if $U=V$ then $M$ is also called a \emph{strict gammoid}.
Marx~\cite{DBLP:journals/tcs/Marx09} gave a randomized polynomial-time procedure for finding a 
matrix representation of a strict gammoid. The error probability can be made exponentially small in the size of the graph. (This is the only source of randomness and error in our kernelization.) A matrix representation for a gammoid for graph $G=(V,E)$ with ground set $U\subsetneq V$ and sources $S$ can be obtained from one for the strict gammoid for $G$ and $S$ by simply deleting columns corresponding to elements of $V\setminus U$.

Let $A, B$ be independent sets in a matroid. We say that $A$ \emph{extends} $B$ if $A \cap B = \emptyset$ and 
$A \cup B$ is again an independent set. Note that from the independence of $A \cup B$ follows the 
independence of $A$ and $B$ due to the second matroid property.

\begin{definition}
    Let $M=(U,\mathcal{I})$ be a matroid, let $\mathcal{A} \subseteq \mathcal{I}$, and let $q \in \mathbb{N}$. A set $\mathcal{A}' \subseteq \mathcal{A}$ is
    \emph{$q$-representative} for $\mathcal{A}$ if for every independent set $B$ of size at most $q$
    there is a set $A \in \mathcal{A}$ that extends $B$ if and only if there is also a set $A' \in \mathcal{A}'$
    that extends $B$.
\end{definition}

Observe that if $\mathcal{A}'$ is $q$-representative for $\mathcal{A}$ and there exists a set
$A \in \mathcal{A}$ that \emph{uniquely extends} some given independent set $I$ of size at most $q$, then this implies that $A \in \mathcal{A}'$.

The following theorem of Lov\'asz~\cite{lovasz1977flats} proves that for any linear matroid there exist small representative sets. It was made algorithmic by Marx~\cite{DBLP:journals/tcs/Marx09} and, thus, permits to find representative sets in polynomial time when given a matrix representation of the matroid. A faster algorithm for this task was developed recently by Fomin et al.~\cite{FominLS14}.

\begin{lemma}[Lov\'asz~\cite{lovasz1977flats}, Marx~\cite{DBLP:journals/tcs/Marx09}] \label{lemma::representative}
    Let $M$ be a linear matroid of rank $q+p$, and let $\mathcal{T} = \{I_1, I_2, \ldots, I_t\}$ be a 
    collection of independent sets, each of size $p$. If $|\mathcal{T}| > \binom{q+p}{p}$, then there is a 
    set $I \in \mathcal{T}$ such that $\mathcal{T} \setminus \{I\}$ is $q$-representative for $\mathcal{T}$.
    Furthermore, given a representation $A$ of $M$, we can find such a set $I$ in $f(q,p) \cdot (\|A\|t)^{\Oh(1)}$ time.
\end{lemma}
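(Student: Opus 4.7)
The plan is to follow Lov\'asz's exterior-algebra argument and observe that every step is polynomial-time given the representation $A$. Fix a $(q+p) \times n$ matrix $A$ representing $M$ over a field $F$, with columns indexed by the ground set $U$. For each $I_j \in \mathcal{T}$, define the Pl\"ucker vector $v_j \in F^N$ with $N = \binom{q+p}{p}$, whose coordinates, indexed by $p$-subsets $S$ of the row set $[q+p]$, are the $p \times p$ minors $\det A[S, I_j]$. Independence of $I_j$ ensures $v_j \neq 0$. Analogously, for any independent $B$ with $|B| = q$ and $B \cap I_j = \emptyset$, define $w_B \in F^N$ from the complementary $q \times q$ minors $\det A[[q+p] \setminus S, B]$ weighted with the appropriate Laplace signs.

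The extension criterion, proved by Laplace expansion along the first $p$ columns of the square matrix $A[\cdot, I_j \cup B]$, reads $\det A[\cdot, I_j \cup B] = \langle v_j, w_B \rangle$. Hence $I_j$ extends $B$ if and only if $\langle v_j, w_B \rangle \neq 0$; the case $|B| < q$ reduces to $|B| = q$ by augmenting $B$ to a size-$q$ independent set disjoint from $I_j$ via the matroid augmentation axiom, preserving the extension property. Since $t > N$, the family $\{v_j\}_{j=1}^{t}$ is linearly dependent in $F^N$, so some $v_j$ satisfies $v_j = \sum_{k \neq j} \alpha_k v_k$. For every $B$ extended by $I_j$ we then have $0 \neq \langle v_j, w_B \rangle = \sum_{k \neq j} \alpha_k \langle v_k, w_B \rangle$, so some $\langle v_k, w_B \rangle$ with $k \neq j$ is nonzero, witnessing an alternative $I_k \in \mathcal{T} \setminus \{I_j\}$ that extends $B$. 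This yields the existential claim.

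Algorithmically, compute every $v_j$ by evaluating $N$ determinants of $p \times p$ submatrices of $A$, assemble them as the columns of an $N \times t$ matrix, and run Gaussian elimination over $F$ to identify a column lying in the span of the preceding columns; any such $I_j$ can be safely removed. All arithmetic is carried out in $F$, giving a total running time of the form $f(q,p) \cdot (\|A\| t)^{\Oh(1)}$ for a suitable $f$. The main subtleties I expect are (i) fixing the sign convention in the Laplace expansion so that $\langle v_j, w_B \rangle$ equals $\det A[\cdot, I_j \cup B]$ with the correct sign, and (ii) the reduction from arbitrary $|B| \leq q$ to $|B| = q$; both are routine once the linear-algebraic setup is in place, but they are the spots where a careless argument could silently fail.
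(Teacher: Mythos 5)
Your proof is correct and takes essentially the same route as the cited source (Marx's Lemma~4.2), which the paper does not reprove here but does replicate in the proof of Lemma~\ref{lemma::directed_sum_representative}: there the argument is phrased abstractly in terms of wedge products $w_i = \bigwedge_{e \in I_i} x_e$ and the fact that $w_i \wedge y = 0$ iff $I_i \cup Y$ is dependent, whereas you write out the Pl\"ucker coordinates and the Laplace pairing explicitly --- mathematically the same object, just coordinatised. Your two flagged subtleties are indeed the right ones to worry about, and both resolve as you expect; note additionally that the pairing identity $\det A[\cdot, I_j \cup B] = \langle v_j, w_B\rangle$ degenerates gracefully to $0$ when $I_j \cap B \neq \emptyset$ (repeated columns), so the single condition $\langle v_k, w_B\rangle \neq 0$ captures both disjointness and joint independence, and the augmentation step for $|B|<q$ is sound because any $I_k$ extending the enlarged $B'$ also extends $B \subseteq B'$.
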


Given a gammoid $M$ we can compute in randomized polynomial-time a representation of the gammoid. 
Together with Theorem \ref{lemma::representative} it follows that given a gammoid $M$ and a collection 
$\mathcal{T} = \{I_1,\ldots, I_t\}$ of independent sets, each of size $p$, we can find in randomized 
polynomial time a set $\mathcal{T}' \subseteq \mathcal{T}$ of size at most $\binom{q+p}{p}$ that is 
$q$-representative for $\mathcal{T}$.


\section{Randomized polynomial kernelization for parameter \texorpdfstring{$\boldsymbol{|S|+k}$}{|S|+k}}\label{section:kernelization}

In this section we present a randomized polynomial kernelization for \ESFVS parameterized by $|S|+k$. Because deletion of one endpoint of each edge in $S$ always constitutes a feasible solution, nontrivial instances have $|S|>k$. Thus, our kernelization also works for parameter $|S|$ alone. However, to achieve a better bound for \ESFVS parameterized by $k$ only it is beneficial to give the kernel size in terms of $|S|$ and $k$ rather than $|S|$ alone.

We use representative sets of independent sets of matroids to obtain a kernel of size $\Oh(|S|^2 k)$.
Our approach is similar to the kernelization of \problem{Deletable Terminal Multiway Cut$(k)$} 
\cite{DBLP:conf/focs/KratschW12}. As in that paper we construct path packings such that certain vertices can be shown to be in a representative set.  Note that, unlike for multiway cut-type problems, a solution $X\subseteq V$ will not necessarily create many connected components. Rather, as used also in the FPT algorithm of Cygan et al.~\cite{DBLP:journals/siamdm/CyganPPW13}, it creates a particular tree-like structure in $G-X$. Nevertheless, endpoints of edges in S, denoted $T:=V(S)$, will play the role of terminals that need to be separated in a certain way; hence a vertex $x$ in $T$  is called a \emph{terminal}. We will focus on the graph $G-S$, i.e., with edges of $S$ deleted, in which a solution $X$ creates a grouping of (not deleted) terminals into connected components. The structure of these components will be crucial for a replacement argument (Lemma \ref{lemma::replacement}) that leads to the required path packing; this constitutes one of the key arguments for our result. 

The kernelization consists of four steps. In the first step we show that if an instance is YES 
then there exists a solution $X$ with a certain path packing from $T$ to $X$. Then we define an appropriate gammoid to 
find in a next step a representative set of size $\Oh(|S|^2 k)$ which is (essentially) a superset of $X$ using 
Lemma \ref{lemma::representative}. Finally we explain how to reduce the graph $G$, using the superset
of the last step, to obtain an equivalent instance of \ESFVS.

\subparagraph{Analyzing solutions.} 
Let $(G,S,k)$ be a yes-instance of \ESFVS$(k+|S|)$. We say that a solution $X$ for $(G,S,k)$ is \emph{dominant}, if it has minimum size and contains a maximal number of vertices from $T$ among solutions of minimum size. 
The vertices in $X \cap T$ correspond to endpoints of edges in $S$ that we delete and the vertices in 
$X_0=X \setminus T$ block all $x$-$y$ paths with $\{x,y\} \in S_0=\{e \in S \mid e \cap X = \emptyset \}$, 
except the one that consists of the edge $\{x,y\}$. 
We show that $X$ is linked to $T$ in a strong sense, with vertices of $X_0$ playing a special role.

\begin{lemma} \label{lemma::path_packing}
    Let $X$ be a dominant solution for $(G,S,k)$ and $x$ any vertex in the set $X_0=X\setminus T$. 
    There exist $|X|+2$ paths from $T$ to $X$ in $G-S$ that are vertex-disjoint except for three paths ending in vertex $x$. Moreover, the paths can be chosen in such a way that each connected component of $G-X-S$ is intersected by at most one path.
\end{lemma}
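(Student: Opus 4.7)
The plan is to prove this lemma via Menger's theorem together with a replacement argument that leverages both the minimality and the dominance of $X$. First, I would reformulate the desired path packing as a vertex-capacitated max-flow problem on $G-S$: regard $T$ as the set of sources and $X$ as the set of sinks, with $x$ carrying capacity $3$ and every other vertex capacity $1$. Equivalently, form a graph $H$ by replacing $x$ with three copies $x^{1}, x^{2}, x^{3}$, each inheriting the $G-S$-neighbourhood of $x$, and look for $|X|+2$ vertex-disjoint paths in $H-S$ from $T$ to $X^{*} := (X\setminus\{x\})\cup\{x^{1},x^{2},x^{3}\}$. By Menger's theorem this packing exists if and only if every $T$-to-$X^{*}$ vertex cut in $H-S$ has size at least $|X|+2$.

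For contradiction, suppose there is such a cut $C\subseteq V(H)$ of size at most $|X|+1$. Since each $v\in X\cap T$ forms a trivial length-zero path from $T$ to $X$, we have $X\cap T\subseteq C$. Let $j := |C\cap\{x^{1},x^{2},x^{3}\}|$ and write $C = C_{G}\cup C_{x}$ with $C_{G} = C\cap V(G)$ and $C_{x} = C\cap\{x^{1},x^{2},x^{3}\}$. I form the candidate replacement $X' := C_{G}\cup\{x\}$ when $j\geq 1$ and $X' := C_{G}$ when $j=0$, yielding $|X'| = |C| - \max(j-1,0)$. The key structural claim is that $X'$ is a feasible solution of $(G,S,k)$. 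I would prove it using the fact that the components of $G-X-S$, joined by the $S$-edges not hit by $X$, form a forest: any $S$-cycle $Z$ surviving in $G-X'$ must avoid $X\cap T\subseteq C_{G}$, and since $X$ is a solution $Z$ must also contain a vertex of $X_{0}$; but the $T$-vertices of $Z$ and any $X_{0}$-vertex on $Z$ lie on opposite sides of $C$ in $H-S$, and the $G-S$-subpaths of $Z$ cannot cross $C$ (lifting to $H-S$ via a suitable $x^{i}$ if the cycle passes through $x$), yielding the required contradiction.

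A case analysis on $j$ then finishes the easy cases: for $j=3$ one gets $|X'|\leq|X|-1$, contradicting minimality; for $j=2$ one gets $|X'|\leq|X|$, and either $|X'\cap T|>|X\cap T|$ (contradicting dominance) or $X'$ is itself a dominant solution to which the argument recurses. The main obstacle, which I expect to be the hardest part of the proof, is the cases $j\in\{0,1\}$, where $|X'|$ may equal $|X|+1$; here the cut does not use the triple capacity at $x$. To handle these I would choose $C$ among all minimum cuts to be the one closest to the sink side, deriving either that $C$ must contain a vertex of $T\setminus X$ (giving the dominance contradiction via a smaller replacement) or that one can trade cut vertices to increase $j$, reducing to the case $j\geq 2$. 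Finally, the supplementary property that each connected component of $G-X-S$ is intersected by at most one path follows immediately from the observation that any simple $T$-to-$X$ path in $G-S$ that avoids $X$ internally has all its non-endpoint vertices lying inside a single component of $G-X-S$, since consecutive non-$X$ vertices of such a path are adjacent in $G-X-S$; hence vertex-disjoint paths automatically occupy distinct components, and a standard rerouting within each component ensures the property for the packing produced by the max-flow.
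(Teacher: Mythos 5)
Your setup is sensible: splitting $x$ into three copies and invoking Menger's theorem to reduce the path-packing claim to a statement about vertex cuts is the dual of the Hall argument the paper actually uses, and the feasibility claim for $X'$ (that removing the graph part $C_G$ of a $T$-to-$X^{*}$ cut kills every $S$-cycle) can indeed be carried through essentially as you sketch. The difficulty is that this tells you nothing once $|C|$ can be as large as $|X|+1$ while $j\le 1$, and you correctly identify that as the crux. Unfortunately, the plan you give for that case (pick the min cut closest to the sink, ``trade cut vertices to increase $j$'', or deduce that $C$ must meet $T\setminus X$) is not an argument; nothing in the flow structure forces $C$ to hit any copy of $x$ or any vertex of $T\setminus X$, and the ``recurse on a new dominant solution'' step in your $j=2$ case has no decreasing measure and hence does not terminate. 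As written, the cases $j\in\{0,1,2\}$ with $|C|\in\{|X|,|X|+1\}$ are simply unresolved, and these are exactly the cases where the bound $|X|+2$ (rather than $|X|$ or $|X|+1$) has to come from somewhere.

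What the paper does instead, and what your cut-replacement cannot see, is to restrict attention to deficient sets $Y\subseteq X_0$ rather than arbitrary cuts in $G-S$. For such $Y$ the replacement is not ``the cut itself'' but a set $Y'$ of at most $|Y|$ \emph{endpoints of $S$-edges}: because the components of $G-X-S$ together with the surviving $S$-edges form a forest, un-deleting $Y$ merges at most $|Y|+1$ tree nodes into one and therefore creates at most $|Y|$ excess $S$-edges, each of which can be killed by deleting one $T$-endpoint. This gives $|Y'|\le|Y|$ and $Y'\subseteq T$, so the swap either strictly shrinks the solution or strictly increases $|X\cap T|$, which is exactly the contradiction with dominance. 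Your $X'=C_G$ has no reason to be small or to consist of $T$-vertices, so the same dichotomy is not available. In short, the Menger reformulation is fine but does not by itself encode the forest structure of bubbles that drives the $+2$; without proving something equivalent to the paper's replacement lemma (Lemma~\ref{lemma::replacement}), the argument does not close. The final paragraph's ``standard rerouting'' for the one-path-per-component property is also not justified by a generic max-flow; in the paper this property falls out of the matching, not of flow decomposition.
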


We use Hall's Theorem and the lemma below to prove this. For this purpose we use the two graphs $G-X$ and $G-X-S$ which simplify the analysis of a dominant solution. We call a connected component $K$ of $G-X-S$ \emph{interesting} if it contains a terminal, i.e., if $T \cap V(K)=(T\setminus X)\cap V(K) \neq \emptyset$, and we say that $x \in X_0$ \emph{sees} an interesting component $K$ if $x$ is adjacent to a vertex of $K$ in $G$. We extend this definition by saying that $Y \subseteq X_0$ sees an interesting component $K$ if at least one vertex $y \in Y$ sees $K$.

\begin{lemma}\label{lemma::replacement}
    If $X$ is a dominant solution then every nonempty set $Y \subseteq X_0$ sees at least $|Y|+2$ 
    interesting components of $G-X-S$.
\end{lemma}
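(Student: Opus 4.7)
The plan is to prove the lemma by contradiction via a replacement argument. Suppose there is a nonempty $Y \subseteq X_0$ that sees only $t \leq |Y|+1$ interesting components $K_1,\ldots,K_t$ of $G-X-S$. From this I will construct a solution $X'$ contradicting the dominance of $X$: either $|X'|<|X|$ (violating minimality) or $|X'|=|X|$ with $|X'\cap T|>|X\cap T|$ (violating the terminal-maximality, since $Y\subseteq X_0$ and hence $Y\cap T=\emptyset$).

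First I would analyse the structure of $G-X$. Because $X$ is a solution, the multigraph $Q$ whose nodes are the connected components of $G-X-S$ and whose edges are the $S$-edges of $G$ between distinct components is a forest; any cycle in $Q$ would yield an $S$-cycle in $G-X$. Non-interesting components contain no endpoints of $S$-edges and are therefore isolated in $Q$, so they do not participate in any $S$-cycle. Moreover every $S$-cycle in $G-(X\setminus Y)$ must pass through at least one vertex of $Y$, since otherwise it would already be an $S$-cycle in $G-X$, contradicting that $X$ is a solution.

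The replacement I would try first is to pick, for each interesting component $K_i$ seen by $Y$, a terminal $z_i\in K_i\cap T$ (which exists because $K_i$ is interesting), set $Z:=\{z_1,\ldots,z_t\}$, and define $X':=(X\setminus Y)\cup Z$. I would verify that $X'$ is a solution by tracing any hypothetical $S$-cycle in $G-X'$: such a cycle must traverse some $y\in Y$, enter a blob $K_i$ that $y$ sees via a neighbour of $y$, and eventually close through an $S$-edge; but deleting the terminal $z_i$ removes all $S$-edges incident to $z_i$, and the forest structure of $Q$ prevents the cycle from rerouting through alternative $S$-edges. The counting then gives $|X'|=|X|-|Y|+t$ and $|X'\cap T|=|X\cap T|+t$. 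When $t\leq|Y|-1$ this contradicts minimality of $X$, and when $t=|Y|$ it contradicts dominance, establishing $t\geq|Y|+1$ in those ranges.

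The main obstacle is the boundary case $t=|Y|+1$, for which the naive replacement only yields $|X'|=|X|+1$. To strengthen the conclusion to $t\geq|Y|+2$ I would refine the construction by identifying one $z_i$ that can be dropped from $Z$ without re-enabling an $S$-cycle in $G-X'$; the forest structure of $Q$ combined with the minimality of $Y$ as a blocker of $S$-cycles in the augmented graph on $Y\cup V(Q)$ should force such redundancy, yielding $Z$ with $|Z|=|Y|$ and $|Z\cap T|=|Y|\geq1$, again contradicting dominance. Proving this redundancy rigorously, and in particular checking that the pruned $Z$ still blocks every $S$-cycle through $Y$, is the delicate step of the proof, as it requires simultaneously controlling the routing of $S$-cycles through $Y$ and through the trees of $Q$.
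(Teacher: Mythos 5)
Your overall strategy matches the paper's: contradict dominance by constructing a replacement solution $X' = (X\setminus Y)\cup Y'$ with $Y'\subseteq T$ and $|Y'|\leq|Y|$. However, the concrete construction you give is both incorrect and quantitatively too weak, and the crucial idea you are missing is exactly the one the paper uses to close the gap.

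First, the choice of $Y'$ is wrong. Picking one arbitrary terminal $z_i$ from each seen interesting component $K_i$ does not break all $S$-cycles in $G-(X\setminus Y)$. A component $K_i$ may contain several terminals, each carrying its own $S$-edge into different subtrees of the bubble-forest, and an $S$-cycle created by reinserting some $y\in Y$ can enter $K_i$ on one side and leave via an $S$-edge at a terminal other than $z_i$; deleting $z_i$ does nothing to such a cycle. Your appeal to ``the forest structure of $Q$ prevents the cycle from rerouting through alternative $S$-edges'' is precisely the step that fails: the forest structure constrains $G-X$, but reinserting $Y$ merges nodes of that forest and creates genuine cycles that route through whichever $S$-edges happen to lie on them. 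The correct object to delete is one endpoint of each $S$-edge in a carefully chosen set, not an arbitrary terminal per component.

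Second, even with the correct choice of vertices, your count is off by one in exactly the place that matters. You set $|Y'|=t$ where $t$ is the number of seen interesting components, which only beats $|Y|$ when $t\leq|Y|$. The paper instead observes that if a merged component of $G-(X\setminus Y)-S$ absorbs $a$ interesting components, then since those $a$ components sit inside a cycle-free structure (with, say, $c$ nodes and at most $c-1$ $S$-edges), merging them leaves $c-a+1$ nodes still spanned by $c-a$ of those $S$-edges, so only $a-1$ $S$-edges need to be killed. Summing over merged groups gives $|Y'|\leq t-1\leq|Y|$ even when $t=|Y|+1$. This ``spanning-forest excess'' count is the key lemma-within-the-lemma; it simultaneously tells you \emph{which} vertices to delete (endpoints of the excess $S$-edges) and gives the bound $|Y'|\leq|Y|$ that drives the contradiction. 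Your proposed patch for the boundary case --- arguing that ``one $z_i$ must be redundant'' --- gestures at this but never identifies the combinatorial reason a whole vertex can always be saved, and it does not repair the more basic flaw in the choice of $Y'$.
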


\begin{proof}
Assume for contradiction that there exists a nonempty set $Y\subseteq X_0$ that sees at most $|Y|+1$ interesting components of $G-X-S$. Let $\mathcal{C}_i$ denote the set of interesting components of $G-X-S$ seen by $Y$, and let $\mathcal{C}_o$ denote the other components seen by $Y$. We will show that there is an alternative solution $X'=(X\setminus Y)\cup Y'$ that is smaller than $X$ or that contains more vertices of $T$, contradicting the choice of $X$ as a dominant solution. To this end, let us consider the graphs $G-X$ and $G-(X\setminus Y)$ (in part repeating things that have been said earlier to get a self-contained proof).

In $G-X$ the components of $G-X-S$ may be connected by edges of $S$ and form a tree-structure with components playing the role of vertices and edges of $S$ whose endpoints are not deleted being the edges of the tree: (We say tree-structure, but a forest of components, connected by $S$-edges, is also fine.) There can be no cycles in this tree-structure because they would give rise to $S$-cycles in $G-X$. Moreover, any other set $X'$ of size at most $k$ such that $G-X'$ consists of components without $S$-edges that are connected in a tree-like manner by $S$-edges is also a valid solution. Note that non-interesting components of $G-X-S$ are isolated in $G-X$ because they do not contain vertices of $T$, i.e., no endpoints of $S$-edges, so they cannot be incident with $S$-edges in $G-X$.

In $G-(X\setminus Y)-S$ the components in $\mathcal{C}_i$ and $\mathcal{C}_o$ may form larger combined components because we do not delete the vertices in $Y$; let $\mathcal{C}'$ denote the set of these components. Crucially, because $Y\subseteq X_0=X\setminus T$, there are no additional vertices of $T$, i.e., $T\setminus X=T\setminus (X\setminus Y)$. Thus, in $G-(X\setminus Y)$ the set of $S$-edges incident with components in $\mathcal{C}'$ is the same as the $S$-edges incident with $\mathcal{C}_i$ in $G-X$; recall that no $S$-edges are incident with components in $\mathcal{C}_o$ in $G-X$. Note that, in general, $G-(X\setminus Y)$ will not have the tree-structure: In comparison to $G-X$ we are not deleting vertices of $Y$, which corresponds to merging some components in $\mathcal{C}_i\cup\mathcal{C}_o$. This may lead to components in $\mathcal{C}'$ that are incident with both endpoints of some $S$-edges (the equivalent of loops) and it may also create other (longer) cycles. We will see that deleting at most $|Y|$ edges of $S$, i.e., deleting a set $Y'$ of at most $|Y|$ endpoints of $S$-edges, will suffice to get the tree-structure, making $(X\setminus Y)\cup Y'$ a valid solution.

Consider a component $C'\in\mathcal{C'}$ of $G-(X\setminus Y)-S$ that fully contains all vertices of some components $C^1_i,\ldots,C^a_i\in\mathcal{C}_i$ and $C^1_o,\ldots,C^b_o\in\mathcal{C}_o$; additionally it may contain vertices of $Y$. (The fact that we must have full containment follows directly by comparing deletion of $X$ from $G-S$ with deletion of $X\setminus Y$ from $G-S$.)
In $G-(X\setminus Y)$ the component $C'$ may be incident with $S$-edges and thus be part of a larger component $C^+$; we want to see that deleting (one endpoint each of) at most $a-1$ $S$-edges from $C^+$ suffices to get the tree-structure.

In $G-X$ instead of component $C^+$ we may have several separate components because we additionally delete the vertices of $Y$. Since $Y$ sees only components in $\mathcal{C}_i\cup\mathcal{C}_o$ there are at most $a+b$ separate components ``created'' from $C^+$ by deleting $Y$ since these are all components contained in $C^+$ that are seen by $Y$. Recall that components in $\mathcal{C}_o$ are isolated in $G-X$ and contain no vertices of $T$ and, thus, they do not contribute any $S$-edges to $C^+$. It remains to consider the components $C^1_i,\ldots,C^a_i$ that are contained in $C^+$.

Assume first that all components $C^1_i,\ldots,C^a_i$ are part of a single connected component in $G-X$. (Recall that they are connected components of $G-X-S$ but may be connected by $S$-edges in $G-X$.) Thus, they are part of a single tree of components (connected by $S$-edges) and not deleting $Y$ corresponds to merging $a$ vertices in this tree into a single one. If the tree had $c$ components and, thus, $c-1$ $S$-edges then we obtain $c-a+1$ components that are connected by $c-1$ $S$-edges in $G-(X \setminus Y)$. (Recall that $Y$ contains no endpoints of $S$-edges.) It therefore suffices to delete $(c-1)-((c-a+1)-1)=a-1$ $S$-edges, i.e., to delete one endpoint of each of $a-1$ $S$-edges, to obtain the tree-structure. (Not \emph{any} $a-1$ edges are ok but we can keep any $c-a$ $S$-edges spanning the $c-a+1$ components and delete the $(c-1)-(c-a)=a-1$ remaining $S$-edges.)

In general, the components $C^1_i,\ldots,C^a_i$ may be part of several different connected components in $G-X$. Nevertheless, this still means that we have a cycle-free structure of components (seen as vertices) connected by $S$-edges. If overall the cycle-free structure has $c$ components then, being cycle-free, it has at most $c-1$ $S$-edges. Thus, merging yields $c-a+1$ components connected by at most $c-1$ $S$-edges and removing at most $a-1$ $S$-edges suffices.

Overall, we get that a component $C'\in\mathcal{C'}$ that fully contains $a$ interesting components from $\mathcal{C}_i$ requires at most $a-1$ vertex deletions of endpoints of $S$-edges to obtain the tree-structure. Since $Y$ sees at most $|Y|+1$ such components, the worst case is achieved by a single component $C'$ containing all $|Y|+1$ interesting components in $\mathcal{C}_i$; this still costs at most $(|Y|+1)-1=|Y|$ vertex deletions, as claimed.

Let $Y'$ contain all the endpoints of $S$-edges that we delete to get the tree-structure. We know that $|Y'|\leq|Y|$ and thus $|(X\setminus Y)\cup Y'|\leq |X|$. Moreover, by the initial considerations, we know that $X'=(X\setminus Y)\cup Y'$ is a feasible solution as $G-X'$ has the required tree-structure. If $|Y'|<|Y|$, including the case that $Y'=\emptyset$, then $|X'|<|X|$ as $Y\neq \emptyset$; this contradicts optimality of $X$ (required for being a dominant solution). If $|Y'|=|Y|$ then $Y'\neq \emptyset$ and $X'$ is an optimal solution that contains more vertices of $T\supseteq Y'$, contradicting the choice of $X$ as a dominant solution. Thus, every nonempty set $Y$ must see at least $|Y|+2$ connected components, as claimed.
\end{proof}

Now we are ready to give the proof of Lemma \ref{lemma::path_packing}. The argument relies on Hall's Theorem and is similar to the one for \problem{Deletable Terminal Multiway Cut}~\cite{DBLP:conf/focs/KratschW12}.

\begin{proof}[Proof of Lemma \ref{lemma::path_packing}]
    We know that every nonempty set $Y \subseteq X_0$ sees at least $|Y|+2$ interesting components. 
    To prove existence of the required path packing we construct a bipartite graph where one side consists of the 
    interesting components and the other side consists of the set $X_0$ and two copies $x',x''$ of 
    the vertex $x \in X_0$. We connect $v \in X_0$ with an interesting component $K$ if 
    $v$ sees $K$ and we connect $x'$ and $x''$ with the same interesting components as $x$.
    For this bipartite graph it holds that for all sets $Y \subseteq X_0 \cup \{x',x''\}$, the size of $N(Y)$
    is at least $|Y|$: This holds trivially for $Y=\emptyset$; assume there exists a nonempty set $Y \subseteq X_0 \cup \{x',x''\}$ such that $|N(Y)|<|Y|$. But then we have $|N(Y \setminus \{x',x''\})| \leq |N(Y)| < |Y| \leq |Y \setminus \{x',x''\}| + 2$, which is a contradiction to Lemma \ref{lemma::replacement}.
    
    Since Hall's condition is satisfied there exists a matching $M$ that covers 
    $X_0 \cup \{x',x''\}$. This matching gives rise to a path packing from $T$ to $X$ where exactly three paths end in $x$ and no other vertices occur in more than one path: For each $v\in X\cap T$ pick the path of length zero that consists only of $v$. For each edge $\{K,v\}$ in the matching $M$, where $v\in X_0\cup\{x',x''\}$, pick an arbitrary path from a terminal $t\in V(K)\cap (T\setminus X)$ to $v$ that uses only vertices from $V(K)\cup\{v\}$. (For $v\in\{x',x''\}$ let the path end in $x$ and use only vertices in $V(K)\cup\{x\}$.) Because $K$ is an interesting component a terminal $t\in V(K)\cap (T\setminus X)$ must exist, and because $K$ is a component of $G-X-S$ the path contains no other vertices of $X$. Similarly, the path cannot contain $S$-edges between vertices of $K$, and its final edge to $v$ cannot be in $S$ because $v\in X_0=X\setminus T$, i.e., because $v$ is not endpoint of any $S$-edge. Moreover, since each interesting component is matched to a single vertex $v\in X_0\cap\{x',x''\}$, all the paths are vertex-disjoint except for the three paths that share their endpoint $x$.
    This path packing, including the trivial paths from $X \cap T$ to $X \cap T$, contains $|X|+2$ paths from $T$ to $X$ in $G-S$ that are vertex-disjoint except for the three paths sharing endpoint $x$. By construction, there is at most one path to any vertex of $X_0$ starting in any interesting component $K$ of $G-X-S$, because the components are used according to the matching $M$. All further paths are of length zero, consisting of only a vertex in $X\cap T$ and are, thus, not contained in components of $G-X-S$.
\end{proof}

\subparagraph{Setting up the gammoid.}
The gammoid $M$ that we use is the direct sum of two gammoids $M_1$ and $M_2$. 
To construct gammoid $M_1$ we define a graph $G_1=(V_1,E_1)$ that is obtained from $G-S$ by adding two so called \emph{sink-only copies} $v'$ and $v''$ for every vertex $v \in V$. A sink-only copy of a vertex $v$ is a vertex $v'$ (or $v''$) that has a directed edge $(u,v')$ for each edge $\{u,v\}$; these were already used in previous work~\cite{DBLP:conf/focs/KratschW12}. Note that adding sink-only copies of vertices does not affect the possible path packings to other vertices since they can only be endpoints of paths; however, they are convenient to capture multiple vertex-disjoint paths that, intuitively, end in the same vertex.
The matroid $M_1$ is defined to the gammoid on $G_1$ with sources $T=V(S)$ and ground set $V_1=\{v,v',v''\mid v\in V\}$; note that the sink-only copies of vertices in $T$ are not sources of $M_1$.
The rank of matroid $M_1$ is $|T|$, because the set of all trivial paths is independent and at most $|T|$ vertices can be linked to $T$.

Matroid $M_2$ is the gammoid on the directed graph $G_2 = K_{k,n} = (S_2 \dot{\cup} \hat V,E_2)$ with sources $S_2$ and ground set $\hat V=\{\hat v\mid v\in V\}$; the edges in $E_2$ are directed from $S_2$ to $\hat V$.
In other words, gammoid $M_2$ is simply a uniform matroid and a (deterministic) matrix representation could also be obtained by using a Vandermonde matrix. The rank of $M_2$ is $k=|S_2|$ because no more than $|S_2|$ vertices can be linked to $S_2$ and every set of at most $k$ vertices of $\hat V$ is linked to $S_2$.

For the application of Lemma \ref{lemma::representative} we will use the matroid $M=M_1 \oplus M_2$, which has rank $|T|+k$. (Matroid $M$ can also be seen as a gammoid on the graph $G_1 \dot{\cup} G_2$ with appropriate sources and ground set but we prefer the explicit direct sum and the implied block-diagonal representation obtained below.)
Representations $A_1$ and $A_2$ for both $M_1$ and $M_2$ can be computed by a randomized polynomial-time algorithm with exponentially small error chance~\cite{DBLP:journals/tcs/Marx09}; hence we get a representation for $M$ by $\mathrm{diag}(A_1,A_2)$, i.e., the block-diagonal matrix with blocks $A_1$ and $A_2$. 
We may assume that $A_1$ has $|T|$ rows and $A_2$ has $k$ rows since this could be achieved by Gaussian elimination (cf.~\cite{DBLP:journals/tcs/Marx09}).

\subparagraph{Applying the representative set lemma.}
Let $\mathcal{T}:=\{\{v',v'',\hat v\}\mid v\in V\}$. For clarity, by the above notation, this means that $v',v''\in V_1$ and $\hat v\in \hat V$ for each $v\in V$. Using Lemma \ref{lemma::representative}
we will prove that we can compute in randomized polynomial time a $(|T|+k-3)$-representative subset $\mathcal{T}'$ of $\mathcal{T}$ that contains for all $x \in X_0=X \setminus T$ the set $\{x',x'', \hat x\}$, where $X$ is any dominant solution for $(G,S,k)$.
Lemma \ref{lemma::representative} guarantees $|\mathcal{T}'| \in \Oh((|T|+k)^3) =\Oh(|S|^3)$,
since we can compute a matrix representation of $M$ in randomized polynomial-time as described above.
We will see later that we can find a $(|T|+k-3)$-representative set of size $\Oh(|S|^2k)$ by a careful look at the proof of Lemma \ref{lemma::representative}, using the fact that $M$ is the 
direct sum of two gammoids and that all sets $\{v',v'',\hat v\}$ in $\mathcal{T}$ have two elements from the first and one element from the second gammoid; a similar argument for getting a smaller representative set was already used by Kratsch and Wahlstr{\"{o}}m~\cite{DBLP:conf/focs/KratschW12}.

To ensure that all sets $\{x',x'', \hat x \}$ with $x \in X_0$ are in $\mathcal{T}'$ we have to show that for each such set $\{x',x'', \hat x \}$ there exists an independent set $I$ of size at most $|T|+k-3$ such that $\{x',x'', \hat x \}$ uniquely 
extends $I$ among triplets in $\mathcal{T}$. This directly implies that $\{x',x'', \hat x \}$ must be in every $(|T|+k-3)$-representative set $\mathcal{T}'$ of $\mathcal{T}$.
\begin{lemma}
    Let $X$ be a dominant solution for $(G,S,k)$ and let $T=V(S)$.
    For all $x \in X_0=X\setminus T$ there exists an independent set $I$ of size at most $|T|+k-3$ in $M$ such that
    $\{x',x'', \hat x \}$ uniquely extends $I$.
\end{lemma}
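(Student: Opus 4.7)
The plan is to build $I = I_1 \cup I_2$ with $I_1 \subseteq V_1$ and $I_2 \subseteq \hat V$ by drawing on the structure supplied by Lemma \ref{lemma::path_packing}. First I set $I_2 := \{\hat y : y \in X \setminus \{x\}\}$; this has size $|X|-1 \leq k-1$ and ensures that every triplet $\{v',v'',\hat v\}$ with $v \in X \setminus \{x\}$ meets $I$ at $\hat v$ and therefore cannot extend it. For $I_1$, I will apply Lemma~\ref{lemma::path_packing} and let $T_{\mathrm{pack}} \subseteq T$ be the set of $|X|+2$ distinct starts of the resulting paths. In $G_1$ I redirect the last edges of two of the three $x$-paths so that they end at $x'$ and $x''$ respectively (leaving the third ending at $x$), shortcut each path to start at its last visited terminal (so no path internally touches a terminal other than its source), and augment by the trivial length-zero paths $t \to t$ for every $t \in T \setminus T_{\mathrm{pack}}$. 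The resulting $|T|$ paths are then pairwise vertex-disjoint in $G_1$. Setting $I_1 := X \cup (T \setminus T_{\mathrm{pack}})$, their targets form exactly $I_1 \cup \{x',x''\}$, so this set is independent (indeed a basis) in $M_1$. Hence $I \cup \{x',x'',\hat x\}$ is independent in $M = M_1 \oplus M_2$, and $|I| = (|T|-2)+(|X|-1) \leq |T|+k-3$.

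For uniqueness I take any triplet $\{v',v'',\hat v\} \in \mathcal{T}$ with $v \neq x$. The case $v \in X \setminus \{x\}$ was handled above. If $v \notin X$ then $\hat v \notin I_2$ and $I_2 \cup \{\hat v\}$ still has size $|X| \leq k$, so it remains independent in the uniform matroid $M_2$. The triplet therefore extends $I$ if and only if $I_1 \cup \{v',v''\}$ is independent in $M_1$, equivalently (since the set has size $|T|=r(M_1)$) if and only if it is a basis of $M_1$.

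The main difficulty is the following key claim, which I would prove by contradiction: for every $v \notin X$, the set $I_1 \cup \{v',v''\}$ is dependent in $M_1$. Suppose instead we had $|T|$ pairwise vertex-disjoint paths from $T$ to $I_1 \cup \{v',v''\}$ in $G_1$. The two paths ending at $v'$ and $v''$ must finish with sink edges from two distinct $G$-neighbors $u_1,u_2$ of $v$; vertex-disjointness from the $|X|$ paths ending at $X \subseteq I_1$ forces $u_1,u_2 \notin X$. Dropping the terminal sink edges and retaining the $|X|$ paths to $X$ yields $|X|+2$ pairwise vertex-disjoint paths in $G-S$ from $T$ to $X \cup \{u_1,u_2\}$, with $|X|-1$ ending in $X \setminus \{x\}$, one in $x$, and two in $u_1,u_2$. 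If $v \notin T$ then $u_1v, u_2v \notin S$, so appending them produces $|X|+2$ vertex-disjoint paths in $G-S$ from $T$ to $X \cup \{v\}$, pairwise vertex-disjoint except that two share the endpoint $v$.

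Using the tree-like structure of $G - X$ on the components of $G-X-S$ exactly as in the proof of Lemma~\ref{lemma::replacement}, this augmented packing lets us exchange $x$ for $v$ (possibly after a small endpoint adjustment of $S$-edges to repair cycles created by reintroducing $x$) to obtain a solution $X^\ast$ with $|X^\ast| \leq |X|$ that is either strictly smaller than $X$ or contains strictly more terminals than $X$, contradicting the choice of $X$ as a dominant solution. The case $v \in T \setminus X$ is handled analogously: even though the edges $u_iv$ may lie in $S$, the terminal $v$ itself can be slotted into the solution in place of $x \notin T$, so $|X^\ast \cap T| > |X \cap T|$ while $|X^\ast| = |X|$, again contradicting dominance. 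The most delicate part of the proof will be executing this replacement cleanly and checking that the endpoint adjustment for $v \in T$ uses at most $|\{x\}|$ vertex deletions; once that is done, the supposed $|T|$-linking cannot exist, $I_1 \cup \{v',v''\}$ is dependent, and $\{x',x'',\hat x\}$ uniquely extends $I$ among triplets in $\mathcal{T}$.
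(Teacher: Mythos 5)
Your construction of the set $I$ coincides with the paper's: you take $I_2 = \hat X \setminus \{\hat x\}$ and $I_1 = X \cup (T \setminus T_{\mathrm{pack}})$, which matches the paper's $T'\cup X\cup(\hat X\setminus\{\hat x\})$, and the size bound is the same. The disagreement is in the uniqueness argument. The paper argues directly from the path packing: if $\{v',v'',\hat v\}$ with $v\notin X$ extended $I$, the two $G_1$-paths to $v',v''$ would lie, after removing their terminal sink edges, entirely inside components of $G-X-S$; but $T'\subseteq I_1$ already contains all but at most one terminal in each such component, so there is no room to start two vertex-disjoint paths to $v',v''$. This is a pure counting contradiction and it crucially exploits the property, inherited from Lemma~\ref{lemma::path_packing}, that at most one path of the original packing (and hence at most one element of $T_{\mathrm{pack}}$) lives in any component of $G-X-S$. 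You never use this property.

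Instead, you attempt a replacement argument: from the hypothetical $|T|$-linking to $I_1\cup\{v',v''\}$ you extract a path packing from $T$ to $X\cup\{v\}$ in $G-S$ and then assert that this ``lets us exchange $x$ for $v$'' to obtain a solution $X^\ast$ contradicting dominance. That inference does not hold. The existence of vertex-disjoint paths into a vertex set is a connectivity statement, not a separation statement: there is no reason to believe $(X\setminus\{x\})\cup\{v\}$ (even after adjusting endpoints of a few $S$-edges) destroys all $S$-cycles, since $v$ could sit deep inside a single component of $G-X-S$ and the $S$-cycles through $x$ in $G-(X\setminus\{x\})$ need not touch $v$ at all. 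Lemma~\ref{lemma::replacement} is a one-directional statement — dominance implies the expansion property — and is not a license to construct better solutions from arbitrary path packings. You acknowledge that ``executing this replacement cleanly'' is the most delicate part, but the issue is not delicacy: the replacement step is not supported and the contradiction you need is obtained much more directly by the component-counting argument using the structure of $T'$, which your proof never touches.
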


\begin{proof}
Let $x$ be an arbitrary vertex of $X_0$.
In a first step we define an independent set $I$ and show in a second step that $\{x',x'',\hat x \}$ 
uniquely extends $I$.
Applying Lemma \ref{lemma::path_packing} implies the existence of a path packing $\mathcal{P}$ of $|X|+2$ paths from $T$ to $X$ in $G-S$ that are vertex-disjoint except for three paths ending in $x$ and such that each connected component of $G-X-S$ is intersected by at most one path of $\mathcal{P}$. This directly implies a path packing $\mathcal{P}_1$ in $G_1$ from $T$ to $X\cup\{x',x''\}$ that is (fully) vertex-disjoint. We retain the property that at most one path intersects the vertex set of any component of $G-X-S$, but note that we do not get exactly the same property for $G_1-X$ because of the still present sink-only copies of vertices in $X$. (The latter point will be no problem and should mainly explain why we need to talk about $G-X-S$ and not only $G_1$. Note that $G-S$ and $G_1$ by construction share the vertex set $V$ to be able to refer to connected components of $G-X-S$ and the graph $G_1$ underlying the gammoid $M_1$.)

While we do not know the paths in $\mathcal{P}_1$ entirely, we know for sure that no vertex of $X\cup\{x',x''\}$ can be an internal vertex of any path in $\mathcal{P}_1$ because there is a path ending in each of those vertices. Similarly, we may assume that no vertex of $T$ is internal to any path of $\mathcal{P}_1$: If not then any path $P\in\mathcal{P}_1$ with internal vertex from $T$ can be shortened to start in that vertex; this argument cannot be repeated indefinitely (as the paths get shorter each time). There is still at most one path intersecting the vertex set of any component of $G-X-S$. 

Now, define $T'\subseteq T$ as those vertices of $T$ in which no path of $\mathcal{P}_1$ starts; there must be exactly $|T|-|\mathcal{P}|=|T|-(|X|+2)$ of them since no vertex of $T$ is internal. Moreover, for each component $K$ of $G-X-S$, the set $T'$ contains all but at most one vertex of $T\cap V(K)$: At most one path of $\mathcal{P}_1$ can start in $T\cap V(K)$ and no vertex can be internal. This will be important for proving the claim below.

Clearly, the set $T'\cup X\cup\{x',x''\}$ is independent in $M_1$ because an appropriate path packing $\mathcal{P}'$ can be obtained from $\mathcal{P}_1$ by adding length zero paths for each $v\in T'$. The set $\hat X=\{\hat x \mid x\in X\}\subseteq \hat V$ is clearly independent in $M_2$ since it has size at most $k$. Thus, the set $I'=T'\cup X\cup\{x',x''\} \cup \hat X$ is independent in $M=M_1\oplus M_2$. Define $I$ as $I'\setminus\{x',x'',\hat x\}$, i.e., $I=T'\cup X \cup (\hat X \setminus \{\hat x\})$. The size of $I$ is at most
\[
|T'|+|X|+(|\hat X|-1)=|T|-(|X|+2) + |X| + |X| -1=|T|+|X|-3\leq |T|+k-3.
\]
Clearly, $\{x',x'',\hat x\}$ extends $I$, as $I'=\{x',x'',\hat x\} \cup I$ is independent and both are disjoint by choice of $I$. We now show that no other $\{v',v'',\hat v\}\in \mathcal{T}$ extends $I$.

\begin{claim}
If $\{v',v'',\hat v\}\in\mathcal{T}$ extends $I$ then $v=x$.
\end{claim}

\begin{proof}
Suppose that $\{v',v'',\hat v\}$ extends $I$. Clearly, this implies that $v\notin X\setminus \{x\}$ because otherwise $\{v',v'',\hat v\}$ would not be disjoint from $\hat X\setminus\{\hat x\}\subseteq I$. Thus, $v\in V\setminus (X\setminus\{x\})$.

Assume, for contradiction, that $v\in V\setminus X$, i.e., that $v\neq x$. We know that $\{v',v'',\hat v\}\cup I$ is independent in $M$, so $I_1:=I\cap V_1$ must be independent in $M_1$. Thus, there exists a collection $\mathcal{P''}$ of $|I_1|$ vertex-disjoint paths from $T$ to $I_1$ in $G_1$. Because $X\subseteq I_1$, the paths, say $P_{v'}$ and $P_{v''}$, from $T$ to $\{v',v''\}$ cannot have internal vertices from the set $X$. Furthermore, they cannot have other sink-only copies as internal vertices. Since $v\in V\setminus X$, this implies that $P_{v'}$ and $P_{v''}$ are entirely contained in some component $K_1$ of $G_1-(X\cup\{x',x''\mid x\in X\})$. (Component $K_1$ corresponds to a component $K$ of $G-X-S$ but also has sink-only copies of each vertex.) Recall now that in $T'$ we have all but at most one vertex of $T\cap V(K)$ for each connected component of $G-X-S$ and this is also true for $T\cap V(K_1)$ as $V(K_1)\cap V=V(K)$. Thus, in $\mathcal{P''}$ there is a path $v$ of length zero for each vertex $T'\cap V(K_1)$, leaving at most one vertex of $T$ to start paths to $\{v',v''\}$. This is a contradiction because $P_{v'}$ and $P_{v''}$ are entirely contained in $K_1$ and fully vertex-disjoint.

Thus, if $v\in V\setminus X$ then $\{v',v''\}\cup I_1$ is not independent in $M_1$ and, hence, $\{v',v'',\hat v\}$ does not extend $I$ in $M$. Together with the first paragraph this implies that $v=x$, as claimed.
\end{proof}

The set $I$ fulfills the required properties which completes the proof.
\end{proof}

We know now that for every vertex $x \in V \setminus T$ that is a vertex in a dominant solution the set 
$\{x',x'',\hat x \}$ is in every $(|T|+k-3)$-representative set $\mathcal{T'}$. If we define 
$V(\mathcal{T}') = \{ v \mid \{v',v'',\hat v\} \in \mathcal{T}'\}$ then this implies that $X_0 \subseteq V(\mathcal{T}')$ for each dominant solution $X$.
Thus, every dominant solution $X$ is contained in $V(\mathcal{T}') \cup T$.

\subparagraph{Shrinking the input graph to $\boldsymbol{\Oh(|V(\mathcal{T}') \cup T|)}$ vertices.}
In the previous parts we have shown that if there exists a solution for $(G,S,k)$, then there exists
a solution that is completely contained in $W:=V(\mathcal{T}') \cup T$. Using this we can make all 
vertices in $V \setminus W$ undeletable. We achieve this by applying the so-called $\torso$ operation to vertex set $W$ in $G$; let $G'= \torso(G,W)$. 
By definition of $\torso(G,W)$, the resulting graph $G'$ has vertex set $W$ and is derived from $G[W]$ by making each pair $\{u,v\}\subseteq W$ adjacent if there is a $u$,$v$-path in $G$ with internal vertices from $V\setminus W$. Note that we do not create double edges or loops in $G'$ and that all edges of $S$ are preserved in $G'$ because $T \subseteq W$.
(The same can be achieved by iteratively selecting a vertex $v\in V\setminus W$, making its neighbors a clique, and deleting $v$ from the graph.)

\begin{lemma} \label{lemma::equivalent_torso}
    $(G',S,k)$ has a solution if and only if $(G,S,k)$ has a solution.
\end{lemma}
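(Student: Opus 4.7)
The plan is to show both directions by translating $S$-cycles between $G$ and $G'$, using the fact that torso edges in $G'$ correspond to paths in $G$ whose internal vertices lie entirely in $V \setminus W$, and that $W$ contains all endpoints of $S$-edges.

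For the $(\Leftarrow)$ direction, I would invoke the previous analysis: if $(G,S,k)$ is a yes-instance, there is a dominant solution $X$, and the representative-set argument established that $X \subseteq V(\mathcal{T}') \cup T = W$; in particular $X$ is a valid vertex set of $G'$. I would then show that $X$ is a solution of $(G',S,k)$ by contradiction: suppose $G' - X$ contains an $S$-cycle $C'$. Replace each torso edge on $C'$ by a corresponding $u$,$v$-path in $G$ with internal vertices in $V \setminus W$; since $X \subseteq W$, none of these internal vertices lies in $X$, so the concatenation is a closed walk in $G - X$ that still uses the same $S$-edge. A closed walk through a specific edge always contains a cycle through that edge, yielding an $S$-cycle in $G - X$, a contradiction.

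For the $(\Rightarrow)$ direction, let $X'$ be a solution of $(G',S,k)$; again $X' \subseteq W$. Suppose for contradiction that $G - X'$ contains an $S$-cycle $C$ with distinguished $S$-edge $\{u,v\}$. Since $u,v \in T \subseteq W$, at least two vertices of $C$ lie in $W$. Let $w_1, w_2, \ldots, w_\ell$ be the vertices of $W$ appearing on $C$ in cyclic order; between consecutive $w_i, w_{i+1}$ the cycle $C$ either uses a single edge of $G[W]$ or a path whose internal vertices lie in $V \setminus W$. In both cases $\{w_i, w_{i+1}\}$ is an edge of $G'$ (originally present, or a torso edge). Because $C$ is a simple cycle, the vertices $w_1, \ldots, w_\ell$ are pairwise distinct, so these edges form a cycle $C'$ in $G'$ on $\{w_1,\ldots,w_\ell\}$ that still contains the $S$-edge $\{u,v\}$. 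Since $X' \cap \{w_1, \ldots, w_\ell\} = \emptyset$ (else $C$ would meet $X'$), this $C'$ is an $S$-cycle in $G' - X'$, contradicting $X'$ being a solution.

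The main thing to watch out for is the backward direction: one must argue that the projected sequence of $W$-vertices of $C$ is actually a cycle in $G'$, not merely a closed walk. This is handled by the simplicity of $C$ (so the $w_i$ are distinct) and by the observation that $S$-edges themselves survive into $G'$ unchanged because $T\subseteq W$. The forward direction is the easier of the two, since expanding torso edges to their underlying paths automatically avoids $X$ thanks to $X \subseteq W$.
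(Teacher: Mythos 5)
Your proof is correct and matches the paper's approach: both directions translate $S$-cycles between $G$ and $G'$, expanding torso edges into paths through $V\setminus W$ in one direction and projecting a cycle of $G-X'$ onto its $W$-vertices in the other, in each case using $X\subseteq W$ to keep the deleted set away from the construction. You are, if anything, slightly more explicit than the paper in noting that the projected $W$-vertices form a genuine cycle (rather than just a closed walk) because the original cycle is simple.
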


It follows from Lemma \ref{lemma::equivalent_torso} that $(G',S,k)$ is an equivalent instance and the graph 
of this instance contains at most $|W|$ vertices. This completes the kernelization.
The correctness of Lemma \ref{lemma::equivalent_torso} follows from the fact that the $\torso$ operation 
preserves the separators that are contained in $W$ (cf.~\cite{DBLP:journals/corr/abs-1110-4765}). For 
completeness we give a short proof of the lemma.
\begin{proof}[Proof of Lemma \ref{lemma::equivalent_torso}]
Let $X$ be a solution for $(G',S,k)$. We prove that $X$ is also a solution for $(G,S,k)$ by contradiction.
Assume that $X$ is not a solution for $(G,S,k)$. Then there exists an $S$-cycle $C=v_1 v_2 \ldots v_l$ in 
$G-X$. 
Note that $S \subseteq E(G')$, because $T=V(S) \subseteq W$ and therefore at least two vertices of $C$ are contained in $W$.
Now we modify $C$ to obtain an $S$-cycle $C'$ in $G'$. Let $v_i,v_j \in W \cap C$ two vertices of the 
cycle with $i<j$ such that $\{v_{i+1}, \ldots, v_{j-1} \} \subseteq V \setminus W$. By definition there 
exists an edge $\{v_i,v_j\}$ in $\torso(G,W)$ and using these edges we obtain cycle $C'$. Note that $C'$ 
contains no vertex of $X$ and contains the same edges from $S$ that $C$ contains. Thus $C$ is an $S$-cycle in $G'-X$ which contradicts the assumption that $X$ is a solution of $(G',S,k)$.

For the other direction we assume that $(G,S,k)$ has a solution. Then there also exists a dominant 
solution $X$ for $(G,S,k)$ and we know that $X\subseteq W$. Again we prove that $X$ is also a solution for $(G',S,k)$ by contradiction.
Assume that $X$ is not a solution for $(G',S,k)$. Then there exists a path $P$ between the endpoints of 
an edge $e=\{x,y\} \in S$ in $G'-X$ that does not use the edge $e$. 
We modify $P'$ to obtain a path $P$ in $G$ that does not contain the edge $e$. 
If $P'$ uses an edge $\{u,v\}$ that is not contained in $G$, then there exists a $u$-$v$ path in 
$V \setminus W$ connecting $u$ and $v$. Crucially, $V\setminus W$ is disjoint from $X$ so this replacement still yields a walk that avoids $X$. Overall we get a walk from $x$ to $y$ in $G$ that does not contain
$e$ as an edge and that avoids $X$. This walk contains a path $P$ from $x$ to $y$ and this path together with the edge $e$
is an $S$-cycle in $G-X$ which is a contradiction to the assumption that $X$ is a solution for $(G,S,k)$.
\end{proof}

So far we have a kernelization that creates an equivalent instance $(G',S,k)$ such that $G'$ has $|W|$ vertices. As mentioned above, Lemma
\ref{lemma::representative} guarantees that $|W| \in \Oh(|S|^3)$ and this implies a polynomial
kernel for \ESFVS parameterized by $|S|$.
If we use the fact that the gammoid $M$ is the direct sum of two gammoids $M_1$ and $M_2$, and that all sets $\{v',v'',\hat v\}\in\mathcal{T}$ contain exactly two elements of $M_1$ and one element of $M_2$, then we can guarantee that $|W| \in \Oh(|S|^2 k)$, which is an improvement for all nontrivial instances with $k<|S|$.

\begin{lemma} \label{lemma::directed_sum_representative}
    Let $M=M_1 \oplus M_2$ be the gammoid of rank $|T|+k$ as defined above and $\mathcal{T}= \{I_1, I_2, \ldots, I_t\}$ be the set of independent sets of $M$ that we use for the kernelization.
    Let $A$ be represented by $\mathrm{diag}(A_1,A_2)$ as above.
    If $|\mathcal{T}| > \binom{|T|}{2} \cdot \binom{k}{1}$, then there exists a set $I \in \mathcal{T}$ such that $\mathcal{T} \setminus \{I\}$ is $(|T|+k-3)$-representative for $\mathcal{T}$.
\end{lemma}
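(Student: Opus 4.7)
The plan is to follow the wedge-product proof of Lov\'asz's representative set lemma (Lemma~\ref{lemma::representative}) but to exploit the block-diagonal structure of $A=\mathrm{diag}(A_1,A_2)$ to confine the relevant wedge vectors to a subspace of dimension $\binom{|T|}{2}\binom{k}{1}$ instead of the full $\binom{|T|+k}{3}$. Write $r:=|T|+k$ and, for each $I=\{u_1,u_2,u_3\}\in\mathcal{T}$, set $v(I):=a_{u_1}\wedge a_{u_2}\wedge a_{u_3}\in\Lambda^3(F^r)$, where $a_u$ denotes the column of $A$ indexed by $u$. I would rely on the two standard facts underlying Lemma~\ref{lemma::representative}: (i) $I$ is independent in $M$ if and only if $v(I)\neq 0$; and (ii) for an independent set $B$ of size at most $r-3$ with associated wedge $w(B)$, the set $I$ extends $B$ if and only if $v(I)\wedge w(B)\neq 0$ in $\Lambda^{|B|+3}(F^r)$ (this encodes the linear independence of the combined columns of $I\cup B$).

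The key observation is a dimension count for the ambient subspace of the $v(I)$. Because $A$ is block-diagonal, every column $a_u$ with $u\in V_1$ has zeros in its last $k$ coordinates, while every column $a_{\hat v}$ with $\hat v\in\hat V$ has zeros in its first $|T|$ coordinates. Hence for any $I=\{v',v'',\hat v\}\in\mathcal{T}$, expanding $v(I)$ in the standard basis $\{e_{i_1}\wedge e_{i_2}\wedge e_{i_3}:i_1<i_2<i_3\}$ of $\Lambda^3(F^r)$ produces nonzero coefficients only on basis vectors $e_{i_1}\wedge e_{i_2}\wedge e_{j}$ with $i_1<i_2\leq|T|<j\leq r$. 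Let $W$ denote the span of these basis vectors; then $\dim W=\binom{|T|}{2}\cdot\binom{k}{1}$, and all $t$ vectors $v(I_1),\ldots,v(I_t)$ lie in $W$.

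If $|\mathcal{T}|>\dim W$, the family $\{v(I_i)\}$ is linearly dependent, so some $v(I_j)$ admits a representation $v(I_j)=\sum_{i\neq j}c_iv(I_i)$. For any independent $B$ of size at most $r-3$ that is extended by $I_j$, wedging against $w(B)$ gives $0\neq v(I_j)\wedge w(B)=\sum_{i\neq j}c_i\,v(I_i)\wedge w(B)$, hence $v(I_i)\wedge w(B)\neq 0$ for some $i\neq j$, meaning $I_i\in\mathcal{T}\setminus\{I_j\}$ also extends $B$. This is precisely the required $(|T|+k-3)$-representativity of $\mathcal{T}\setminus\{I_j\}$. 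I expect no real obstacle here: the only step going beyond the textbook Lov\'asz--Marx argument is the dimension count for $W$, which is the same trick Kratsch and Wahlstr\"om~\cite{DBLP:conf/focs/KratschW12} used for their direct-sum gammoids in the \DTMWC kernel.
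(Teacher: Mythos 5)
Your proposal is correct and takes essentially the same route as the paper: both follow the Lov\'asz--Marx wedge-product argument and exploit the block-diagonal form of $A$ to confine the wedge vectors $v(I)$ to a subspace of dimension $\binom{|T|}{2}\binom{k}{1}$, then use multilinearity of the wedge product to derive a contradiction from a unique extender. Your explicit identification of the relevant basis vectors $e_{i_1}\wedge e_{i_2}\wedge e_j$ with $i_1<i_2\leq|T|<j$ is a slightly more detailed justification of the dimension bound that the paper states more informally, but the substance is identical.
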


The proof of Lemma \ref{lemma::directed_sum_representative} is similar to Marx~\cite[Lemma 4.2]{DBLP:journals/tcs/Marx09}. We additionally use the fact that $M$ is the direct sum of two gammoids to obtain that the vectors in the exterior algebra which represent the sets in $\mathcal{T}$ span a space of smaller dimension.
\begin{proof}[Proof of Lemma \ref{lemma::directed_sum_representative}.]
Let $U$ be the ground set of the matroid $M$ which equals the set of columns of $A$. For each $e \in U$, let $x_e$ be the corresponding $(|T|+k)$-dimensional column vector of $A$ and let $w_i = \bigwedge_{e \in A_i} x_e$ be a vector in the exterior algebra of the linear space $F^{|T|+k}$. Every $w_i$ is the wedge product of three vectors where exactly two are from $\bigl(\begin{smallmatrix} A_1 \\ 0 \end{smallmatrix} \bigr)$ and one from $\bigl(\begin{smallmatrix} 0 \\ A_2 \end{smallmatrix} \bigr)$. The two vectors corresponding to $\bigl(\begin{smallmatrix} A_1 \\ 0 \end{smallmatrix} \bigr)$ can only span a space of dimension $\binom{|T|}{2}$ and the vectors corresponding to $\bigl(\begin{smallmatrix} 0 \\ A_2 \end{smallmatrix} \bigr)$ can only span a space of dimension $\binom{k}{1}$.
Thus, the $w_i$'s span a space of dimension at most $ \binom{|T|}{2} \cdot \binom{k}{1} $. If $|\mathcal{T}| > \binom{|T|}{2} \cdot \binom{k}{1}$, then the $w_i$'s are not independent and there exists some vector $w_l$ that can be expressed as a linear combination of the other vectors. 

One can show analogously to Marx~\cite[Lemma 4.2]{DBLP:journals/tcs/Marx09} that $\mathcal{T} \setminus \{I_l\}$ is $(|T|+k-3)$-representative for $\mathcal{T}$. We replicate this proof for convenience of the reader. 
Assume that there exists a set $Y$ of size at most $|T|+k-3$ such that $I_l$ extends $Y$ and no other set $I_i$, $i \neq l$ extends $Y$. Let $y = \bigwedge_{e \in Y} x_e$. One property of the wedge product is that the product of some vectors in $F^{|T|+k}$ is zero if and only if they are not independent. Therefore it holds that $w_l \wedge y \neq 0$ and $w_i \wedge y = 0$ for every $i \neq l$. But $w_l$ is a linear combination of other $w_i$'s and by the multi-linearity of the wedge product we get that $w_l \wedge y \neq 0$ is a linear combination of the values $w_i \wedge y = 0$ for $i \neq l$, which is a contradiction.
\end{proof}

As mentioned above, Marx showed in~\cite{DBLP:journals/tcs/Marx09} that one can find in randomized polynomial-time a matrix with $r(M)$ rows that represents a given gammoid $M$. We can make this proof algorithmic in the same way Marx did ~\cite[Lemma 4.2]{DBLP:journals/tcs/Marx09}.
Combined with Lemma \ref{lemma::directed_sum_representative} it follows directly that we can find a $(|T|+k-3)$-representative subset $\mathcal{T'}$ of $|\mathcal{T}|$ whose size is at most $\binom{|T|}{2} \cdot \binom{k}{1} \in \Oh(|S|^2 k)$.
This leads to a polynomial kernel with $\Oh(|S|^2 k)$ vertices for \ESFVS parameterized by $|S|$ and $k$.


\section{Reducing the size of \texorpdfstring{$\boldsymbol{S}$}{S}}\label{section:reducing}

We have seen that \ESFVS parameterized by $|S|$ and $k$ has a polynomial kernel.
Now the goal is to reduce the size of the set $S$ until $|S|$ is polynomially bounded in $k$. 
This will lead to a polynomial kernel of \ESFVS parameterized by $k$.

To begin, we do some initial modifications to ensure that we can always find a solution of size at most $k$ that contains no vertex of the set $V(S)$, if one exists. For this we first delete all vertices $v \in V$ with the property that $e=\{v,v\} \in S$ is a loop in $G$; since the vertex $v$ must be in any solution, we decrease the value $k$ by one. Next we delete all remaining loops, because these loops are not in $S$ and cannot be contained in any $S$-cycle. We also reduce the number of edges between two vertices $v,w \in V(G)$. If no edge that is incident to $v$ and $w$ is contained in the set $S$, then we delete all except one edge. On the other hand, if at least one edge between $v$ and $w$ is contained in $S$, then we delete all except two edges. One of these edges is contained in $S$ and the other not.
In the next step we add for every edge $e=\{v,w\} \in S$ two new vertices $v_e,u_e$ to the graph, subdivide the edge $e$ into three edges $\{v,v_e\},\{v_e,w_e\}, \{w_e,w\}$, and edit $S$ by replacing edge $e$ by the edge $\{v_e,w_e\}$ in $S$. If a solution $X$ of \ESFVS contains a vertex $x_e \in V(S)$, then we can instead add the vertex $x$ to $X$ and delete $x_e$ from $X$, because every cycle that contains vertex $x_e$ also contains vertex $x$; hence we can always find an optimal solution that is disjoint from $V(S)$.

Let $(G,S,k)$ be an instance of \ESFVS, such that $G$ is a graph with the above properties.
Analogous to the paper of Cygan et al.~\cite{DBLP:journals/siamdm/CyganPPW13} we consider a solution $Z$ of the \ESFVS, with the difference that our solution is an $8$-approximation of the problem, to reduce the size of $S$. 
Even et al.~\cite{DBLP:journals/siamcomp/EvenNZ00} show that there exists an $8$-approximation algorithm for \SFVS. Since \SFVS and \ESFVS are equivalent (cf.~\cite{DBLP:journals/siamdm/CyganPPW13}), we can compute in polynomial time an $8$-approximation for \ESFVS and we can assume that $Z \cap V(S) = \emptyset$.
If $|Z| > 8k$, then we can stop immediately because no solution of size at most $k$ can exist.
On the other hand, if $|Z| \leq k$, then $Z$ is a solution for the problem and we are done.

The set $Z$ is a feasible solution to \ESFVS on $(G,S,|Z|)$. This implies that every edge $e \in S$ is a bridge in $G-Z$. In a next step we also remove all edges in $S$ from $G-Z$. Every connected component in $G-Z-S$ contains no edge from $S$ and, following Cygan et al.~\cite{DBLP:journals/siamdm/CyganPPW13}, we call such a component a \emph{bubble}. We denote the set of bubbles by $\mathcal{D}_Z$ and define a graph $H_Z=(\mathcal{D}_Z, E_{\mathcal{D}_Z})$ whose vertices are bubbles and with bubbles $I$ and $J$ being adjacent, i.e., $\{I,J\} \in E_{\mathcal{D}_Z}$, if and only if the components $I$ and $J$ are connected by an edge from $S$. The graph $H_Z$ is a forest, because $Z$ is a solution for $(G,S,|Z|)$ and a cycle in $H_Z$ would give rise to an $S$-cycle in $G-Z$. Similarly, no two bubbles can be connected by more than one edge of $S$. By $V_I$ we denote the vertices that are contained in bubble $I$.
Since $|E(V_I,V_J) \cap S| \leq 1$ for all $I,J \in \mathcal{D}_Z$ and equality holds if and only if $\{I,J\} \in E_{\mathcal{D}_Z}$, we can associate an edge $e=\{I,J\} \in E_{\mathcal{D}_Z}$ with the one edge $e_S=\{v_I,v_J\}$ in $E(V_I,V_J) \cap S$.
If we add the vertex set $Z$ and all edges $\{z,I\}$ with the property that $z \in Z, I \in \mathcal{D}_Z$ and $E(z,V_I) \neq \emptyset$ to the graph $H_Z$ we obtain a graph $H_Z^+$ that contains $S$-cycles.
Note that every $S$-cycle must contain a vertex of the set $Z$. We partition the set of bubbles 
according to the number of bubbles they are connected with.
\begin{definition}
    A bubble $I \in \mathcal{D}_Z$ is called%
    \begin{inparaenum}[(i)]
        \item solitary, if $\deg_{H_Z}(I)=0$;
        \item leaf, if $\deg_{H_Z}(I)=1$; and 
        \item inner, if $\deg_{H_Z}(I)\geq 2$.
    \end{inparaenum}
    By $\mathcal{D}_Z^s, \mathcal{D}_Z^l ,\mathcal{D}_Z^i$ we denote the corresponding sets of of bubbles.
\end{definition}

Let $X \subseteq V \setminus V(S)$ be a superset of $Z$. We define the graphs $H_X$, $H_X^+$ as well as the sets $\mathcal{D}_X, E_{\mathcal{D}_X}$ analogously to the graphs $H_Z,H_Z^+$ and the sets $\mathcal{D}_Z, E_{\mathcal{D}_Z}$.
Observe that the number of edges in $S$ is at most $|\mathcal{D}_Z \setminus \mathcal{D}_Z^s|$, because 
$H_Z$ is a forest, any two bubbles are connected by at most one $S$-edge, and $V(S) \cap Z = \emptyset$.

So far our setup is essentially the same as the one used by Cygan et al.~\cite{DBLP:journals/siamdm/CyganPPW13}. However, instead of an $8$-approximate solution they use the framework of iterative compression, which provides a solution $Z$ of size $k+1$ and leaves them with the task of reducing the number of $S$-edges for the problem of finding a solution $Z^*$ that is \emph{disjoint} from $Z$. Moreover, it suffices for them to consider the case that every feasible solution (if one exists) is disjoint from $Z$. In this setting they are able to reduce to an equivalent instance (or find that some assumption was violated) with only $\Oh(k^3)$ edges in $S$.

Thus, while many relevant structures like $z$-flowers or parallel $x$-$y$ paths containing $S$-edges are the same, many things have to be handled differently. In particular, if we find that at least one out of two vertices $x,y\in Z$ must be in the solution then we cannot stop (using the maximality condition) but need to continue and use this information in a more direct way.

During the reduction we detect certain pairs $\{x,y\}$ of different vertices with the property that each solution of size at most $k$ must contain at least one of the vertices (if one exists).
We store this fact as a \emph{pair-constraint}. We keep and enforce this information in the final instance, unless we decide earlier to delete $x$ or $y$.
By $\mathcal{P}$ we denote the set of pair-constraints that we have found so far. We can interpret this 
set as a set of edges and by $V(\mathcal{P})$ we denote all vertices that are contained in a pair-constraint.
Note that vertices from the set $V(S)$ are never contained in a pair-constraint from $\mathcal{P}$, because there always exists a solution that is disjoint from $V(S)$.
We need the set $\mathcal{P}$ to detect edges in $S$ that may be safely deleted. To this end, we generalize the \ESFVS problem by adding a set of pair-constraints $\mathcal{P}$ to the input; we call this problem \PESFVS. 
\problembox{Pair-Constrained Edge Subset Feedback Vertex Set}{$k$}{
An undirected graph $G$, a set $S \subseteq E$ of edges, a set $\mathcal{P}$ of pair-constraints and an integer $k$.}{Does there exist a set $X \subseteq V$ of size at most $k$ such that $G-X$ contains no $S$-cycle and such that for each pair-constraint $\{x,y\} \in \mathcal{P}$ we have $x \in X$ or $y \in X$?}
Clearly, instances $(G,S,k)$ of \ESFVS and $(G,S,\emptyset,k)$ of \PESFVS are equivalent.
Our goal is to reduce the size of $S$ by detecting $S$-edges that we can delete from $S$ without changing the outcome. This leads to the following definition:
\begin{definition}
Let $(G,S,\mathcal{P},k)$ be an instance of \PESFVS. We call an edge $e \in S$ \emph{irrelevant}, if $X \subseteq V(G)$ is a solution for $(G,S,\mathcal{P},k)$ if and only if $X$ is a solution for $(G,S \setminus \{e\}, \mathcal{P},k)$.
\end{definition}

Note that if two different $S$-edges $e$ and $e'$ are irrelevant in $(G,S,\mathcal{P},k)$, then $e'$ is not necessarily irrelevant in $(G,S \setminus \{e\},\mathcal{P},k)$.
In addition we do not expect to find all irrelevant edges or pair-constraints.

\subparagraph{The reduction rules.}
We now present our reduction rules. Throughout we assume that always the lowest numbered applicable rule is applied first. Correctness and efficiency of the overall reduction process will be proved later.

Let $(G,S,\mathcal{P}=\emptyset,k)$ be an instance for \PESFVS and let $Z$ be an $8$-approximation of 
this problem with $k<|Z| \leq 8k$ that is disjoint from $V(S)$.
In the following the graphs $G-Z$, $G-Z-S$, $H_Z$, and $H_Z^+$ are always defined with respect to the current 
instance $(G,S,\mathcal{P},k)$ of \PESFVS. Note that $Z \subseteq V$ and we delete vertices from $Z$ if
we delete the corresponding vertex in $V$.

\begin{description}
    \item[Rule \namedlabel{rule::k<=0}{1}:] If $k<0$, or if $k=0$ and there exists an $S$-cycle, then reduce 
                $(G,S,\mathcal{P},k)$ to some trivial false instance,  
                i.e. $G':=(\{x\},\{ e=\{x,x\} \})$, $S':=\{e\}$, $\mathcal{P'} = \emptyset$ and $k':=0$.
    \item[Rule \namedlabel{rule::cc}{2}:] Delete all bridges and all connected components not containing 
                any edge from $S$.
    \item[Rule \namedlabel{rule::bridge}{3}:] If there exists an edge $e \in S$ such that $e$ is a bridge 
                in $\left(V,E \setminus ( S \setminus \{e\})\right)$, then reduce to $S'=S \setminus \{e\}$.
\end{description}

Rules \ref{rule::cc} and \ref{rule::bridge} ensure that each bubble $I \in \mathcal{D}_Z$ is adjacent to a 
vertex in $Z$ in the graph $H_Z^+$, i.e. for all $I \in \mathcal{D}_Z$ we have $E_{H_Z^+}(V_I,Z) \neq \emptyset$: Since Rule \ref{rule::cc} is not applicable every bubble $I \in \mathcal{D}_Z$ must be adjacent to a bubble $J \in \mathcal{D}_Z \setminus I$, or a vertex in $Z$; otherwise $G[V_I]$ would be a connected component of $G$ that does not contain any edge from $S$ ($V_I$ was deleted in Rule \ref{rule::cc}). From Rule \ref{rule::bridge} follows that a bubble $I \in \mathcal{D}_Z$ must be adjacent to a vertex in $Z$; otherwise the edge $e \in N(V_I) \cap S$ would be a bridge in $\left(V,E \setminus ( S \setminus \{e\})\right)$.

\begin{description}
    \item[Rule \namedlabel{rule::pair1}{4}:] If there exists a vertex $v$ in the set $V(\mathcal{P})$ 
                that is contained in at least $k+1$ pair-constraints of $\mathcal{P}$, then we reduce to $G'=G-v$ and $k'=k-1$.
    \item[Rule \namedlabel{rule::pair2}{5}:] If $|\mathcal{P}|> k^2$ (and Rule \ref{rule::pair1} is not 
                applicable), then reduce $(G,S,\mathcal{P},k)$ to some trivial false instance.
    \item[Rule \namedlabel{rule::flower}{6}:] If there exists a $z$-flower of order $k+1$ in $G$ for a  
                vertex $z \in Z$, then we reduce to $G':=G-z$ and $k':=k-1$.
\end{description}

For the next rules we need a maximal matching $M$ in $H_Z$ that covers all inner bubbles $\mathcal{D}_Z^i$ in $H_Z$. Note that two adjacent leaf bubbles $I_1,I_2$ are not adjacent to an inner bubble and form a $K_2$ in $H_Z$, hence the edge $\{I_1,I_2\} \in E_{\mathcal{D}_Z}$ is contained in every maximal matching in $H_Z$. We use this matching to detect pair-constraints in $Z$. To this end we introduce the following definition:
Let $e=\{I,J\}$ be an edge in the matching $M$. We say $e$ \emph{sees} the pair $\{x, y\}$ of different 
vertices $x,y \in Z$ respectively the vertex $x \in Z$, if $\{I,x\},\{J,y\} \in E(H_Z^+)$ or $\{I,y\},\{J,x\} \in E(H_{Z}^+)$ respectively $\{I,x\},\{J,x\} \in E(H_Z^+)$. 

\begin{description}
    \item[Rule \namedlabel{rule::addpair1}{7}:] If at least $(k+2)$ edges in $M$ see a pair $\{x,y\}$ of
                different vertices in $Z$, then we add $\{x,y\}$ to the set of pair-constraints $\mathcal{P}$.
    \item[Rule \namedlabel{rule::useless1}{8}:] If there exists an edge $e \in M$ such that $e$ sees no                            
                single vertex $z \in Z$ and for every pair $\{x,y\}$ seen by $e$ the pair $\{x,y\}$ is a pair-constraint in $\mathcal{P}$, then remove $e_S$ from $S$ and $e$ from $M$.
                (Recall: If $e=\{I,J\} \in M \subseteq E(H_Z)$, then $e_S$ is the unique edge in $E(V_I,V_J) \cap S$.)
\end{description}

The matching $M$ is always recomputed if, through application of rules, it does no longer cover every inner bubble or is maximal when testing whether Rules \ref{rule::addpair1} or \ref{rule::useless1} apply (i.e., if the preceding rules do not apply). If $M$ does cover all inner bubbles but neither Rule \ref{rule::addpair1} nor \ref{rule::useless1} apply then, as we will prove later, this implies $|M|\in\Oh(k^3)$ and, hence, that there are at most $2|M|\in\Oh(k^3)$ inner bubbles.

Let $L = \mathcal{D}_Z^l \setminus V(M)$ be the set of leaf bubbles that are not covered by $M$.
Because the matching covers at least all inner bubbles, we know that $|S| \leq 2|M|+|L|$.
Therefore we have to find a reduction rule that reduces the number of leaf bubbles in $L$. Every leaf bubble in $L$ is adjacent to an inner bubble in $H_Z$, because $M$ covers all leaf bubbles that are not adjacent to an inner bubble.
To bound the number of leaf bubbles in $L$ we define for each $z \in Z$ a graph $G_z$ with the help of the following two sets. The first one, $L_z = N_{H_Z^+}(z) \cap L$, is the set of all leaf bubbles $I$ that are adjacent to $z$ in $H_Z^+$.
The other $V_z^i = \{v \in V \mid \exists J \in N_{H_Z^+}(L_z) \colon v \in V_J \}$ consists of all 
vertices that are contained in an inner bubble that is adjacent to a leaf bubble in $L_z$.

\begin{fleqn}
\begin{align*}
    V(G_z) &= \{z\} \cup L_z \cup V_z^i \\
    E(G_z) &= E_{H_Z^+}(z,L_z) \cup \{ \{I,w\} \mid \exists I \in L_z, v \in V_I, w \in V_z^i \colon \{v,w \} \in S  \} \cup (E(G[V_z^i]) \setminus S)
\end{align*}
\end{fleqn}

\begin{figure}[t]
    \centering
    \begin{tikzpicture}[scale=0.75]
        \node[fill, circle, inner sep = 2pt] (z) at (6cm, 0cm) [label=below:$z$] {};
        \node[draw,circle] (1) at ( 1cm,2cm) {};
        \node[draw,circle] (2) at ( 3cm,2cm) {};
        \node[draw,circle] (3) at ( 5cm,2cm) {};
        \node[draw,circle] (4) at ( 7cm,2cm) {};
        \node[draw,circle] (5) at ( 9cm,2cm) {};
        \node[draw,circle] (6) at (11cm,2cm) {};
        \node[draw,ellipse, minimum width=2cm, minimum height=1cm] (a) at (2cm,5cm) {};
        \node[draw,ellipse, minimum width=2cm, minimum height=1cm] (b) at (5cm,5cm) {};
        \node[draw,ellipse, minimum width=2cm, minimum height=1cm] (c) at (9cm,5cm) {};
        \foreach \i/\n in {1/a,2/a,3/b,4/c,5/c,6/c}
            \draw[red, very thick, dashed] (\i) -- (\n);
        \foreach \i in {1,...,6}
            \draw (z) -- (\i);
        \node[draw,ellipse, dotted, minimum width=12cm, minimum height=1cm] at (6cm,2cm) [label=right:$\quad L_z$] {};
        \node[draw,ellipse, dotted, minimum width=12cm, minimum height=2cm] at (6cm,5cm) [label=right:$\quad V_z^i$] {};
    \end{tikzpicture}
         \caption{Graph $G_z$, \mytikzmark $S$-edges in $G_z$}
\end{figure}
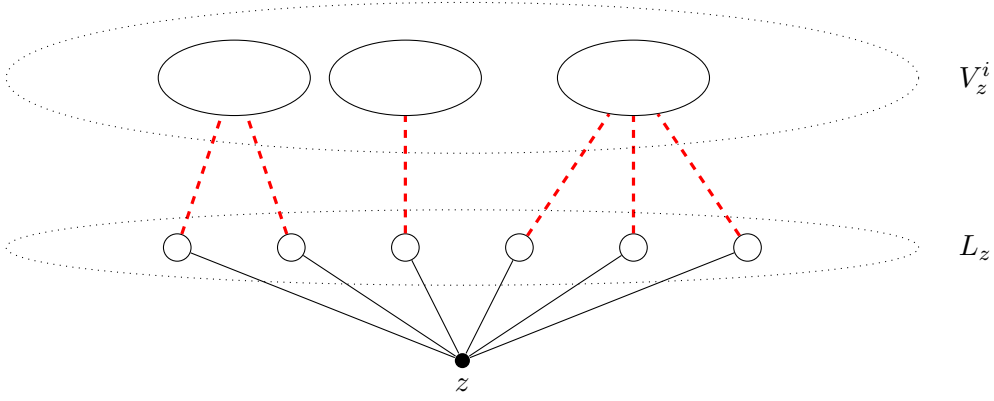

In the graph $G_z$ each leaf bubble $I \in L_z$ is a single vertex. We are not interested in the internal structure of leaf bubbles in $L_z$, whereas we are interested in the structure of the inner bubbles that are adjacent to the leaf bubbles in $L_z$. Thus we add the connected component that corresponds to an inner bubble which is adjacent to a bubble in $L_z$ to $G_z$.
In order to apply the concept of flowers and blocking sets in $G_z$, an edge $e \in E(G_z)$ is an $S$-edge in $G_z$ if $e=\{I,w\}$ with $I \in L_z$ and $w \in V_z^i$. Note that $e$ is an edge in $G_z$, because there exists an $S$-edge $e'=\{v,w\}$ in $G$ with $v \in V_I$.

\begin{lemma} \label{lemma::z-flower}
    If there exists no $z$-flower of order $k+1$ in $G_z$ for a vertex $z \in Z$, then we can find a $z$-blocker $B_z \subseteq V_z^i \setminus V(S)$ of size at most $2k$ in $G_z$.
\end{lemma}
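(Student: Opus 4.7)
The plan is to reduce the lemma to Gallai's Theorem (Theorem \ref{theorem::gallai}) applied inside $G_z$, and then to reroute the resulting blocker into $V_z^i \setminus V(S)$.

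First I would set $A := L_z$ and study $A$-paths in $G_z - z$. The structure of $G_z$ makes the correspondence to $z$-flowers clean: the only edges at $z$ go to $L_z$, and each $I \in L_z$ is a leaf in $H_Z$ and so has exactly one outgoing $S$-edge; hence in $G_z$ the vertex $I$ has degree two, with neighbors $z$ and a single vertex $w_{e_I} \in V(S) \cap V_z^i$. This forces every cycle through $z$ in $G_z$ to meet $L_z$ exactly at its two $z$-neighbors; removing $z$ from such a cycle yields an $A$-path whose end-edges are $S$-edges, and conversely any $A$-path in $G_z - z$ extends back to an $S$-cycle through $z$. In particular, a $z$-flower of order $t$ in $G_z$ corresponds to $t$ vertex-disjoint $A$-paths in $G_z - z$. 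Since by assumption there is no $z$-flower of order $k+1$, Theorem \ref{theorem::gallai} provides a set $B \subseteq L_z \cup V_z^i$ of size at most $2k$ that intersects every $A$-path.

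The main work is to push $B$ into $V_z^i \setminus V(S)$. I would rely on the following local claim: for every $I \in L_z$, the unique $V_z^i$-neighbor $w_e$ of $I$ is a middle vertex (so $w_e \in V(S)$), has degree two in $G$, and its other neighbor $w$ is the original endpoint of the subdivided $S$-edge, so $w \notin V(S)$. Moreover $w \notin Z$: otherwise $w_e$ would be isolated in $G - Z - S$ and its bubble would be a singleton, hence a leaf and not an inner bubble, contradicting $w_e \in V_z^i$ (whose bubbles are inner because $L_z \subseteq L$ is uncovered by the maximal matching $M$ and therefore cannot be adjacent to leaf bubbles in $H_Z$). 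Thus $w \in V_z^i \setminus V(S)$ and serves as a partner for both $I$ and $w_e$. I would then replace every $v \in B \cap (L_z \cup V(S))$ by its partner $w$: an $A$-path ending at $I \in L_z$ must leave $I$ via the edge $\{I, w_e\}$ and, since $w_e$ has only the two neighbors $I$ and $w$, then visit $w$; an $A$-path through $w_e \in V(S) \cap V_z^i$ is internal at $w_e$ and uses both of its incident edges, so again visits $w$. Hence the resulting $B_z \subseteq V_z^i \setminus V(S)$ has size at most $2k$ and still blocks every $A$-path, so by the correspondence above it is the desired $z$-blocker.

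The hard part is this rerouting step: it relies on combining the preprocessing (which forces every middle vertex to have degree two with its other endpoint outside $V(S)$) with the structural fact that the $V_z^i$-side bubble of each $I \in L_z$ is inner and hence not a singleton. Once these ingredients are in hand the argument is essentially bookkeeping.
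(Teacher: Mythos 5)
Your proposal is correct and follows the same route as the paper: apply Gallai's theorem to $L_z$-paths in $G_z - z$ to obtain a blocker in $L_z \cup V_z^i$, then use the degree-two structure created by the subdivision preprocessing to push the blocker into $V_z^i \setminus V(S)$. The one place you are more explicit than the paper is the argument that the original endpoint $w$ of the subdivided edge lies outside $Z$ (and hence lands in $V_z^i$); you justify this by observing that otherwise $w_e$'s bubble would be a singleton leaf, contradicting that bubbles meeting $V_z^i$ are inner because $L_z$-bubbles survive the maximal matching $M$. The paper asserts $v \in V_z^i \setminus V(S)$ with only a pointer to the preprocessing, so your extra care is a genuine (small) improvement in rigor. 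The paper also does the rerouting in two steps, $I \to N_{G_z - z}(I) = w_e$ and then $w_e \to w$, whereas you merge these; that is cosmetic. One point worth stating explicitly (though it is easily repaired): when you say ``replace every $v \in B \cap (L_z \cup V(S))$ by its partner,'' you implicitly assume every $u \in V(S) \cap V_z^i$ you might encounter has degree two in $G_z$; those $u$ whose $S$-neighbor lies outside $L_z$ have degree one in $G_z$, hence lie on no $A$-path and can simply be dropped from $B$ rather than rerouted. With that clarification your argument is complete.
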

The lemma follows from Theorem \ref{theorem::gallai} and the preprocessing as well as the 
construction of $G_z$.

\begin{proof}[Proof of Lemma \ref{lemma::z-flower}]
    The number of vertex-disjoint $L_z$-paths in $G_z -z$ is at most $k$, otherwise the $L_z$-paths together with vertex $z$ would correspond to a $z$-flower of order $k+1$ in $G_z$; this contradicts the assumption. From Theorem \ref{theorem::gallai} it follows that there exists a set $B_z \subseteq V(G_z-z)=L_z \cup V_z^i$ of size at most $2k$ intersecting every $L_z$-path. Since every $S$-cycle through $z$ in $G_z$ must contain an $L_z$-path, $B_z$ is a $z$-blocker of size at most $2k$ in $G_z$.
    
    It remains to show that there exists a $z$-blocker $B_z \subseteq V_z^i \setminus V(S)$.
    First we assume that there exists a vertex $I \in B_z \cap L_z$. From the construction of $G_z$ it follows that every leaf bubble $I \in L_z$ has degree one in $G_z-z$. Thus instead of $I$ we can choose the vertex in $N_{G_z}(I)$ for the $z$-blocker $B_z$ to obtain that $B_z \subseteq V_z^i$.
    
    In the next step we take care that $B_z$ is also disjoint from $V(S)$.
    Assume that $B_z$ contains a vertex $v_e \in V(S) \cap V_z^i$. From the preprocessing it follows that we can add $v \subseteq V_z^i \setminus V(S)$ to $B_z$ and delete $v_e$ from $B_z$, because every cycle that contains $v_e$ also contains $v$.
    
    Note that we delete at least as many vertices from $B_z$ as we add to $B_z$, hence $B_z$
    is still of size at most $2k$.
\end{proof}

Since no previous rule is applicable and a $z$-flower of order $k+1$ in $G_z$ gives rise to a $z$-flower of order $k+1$ in $G$, we find a $z$-blocker of size at most $2k$ for every vertex $z \in Z$.
Let $B = \bigcup_{z \in Z} B_z$ be the union of all $z$-blockers $B_z$ of size at most $2k$. 
Note that the set $L$ is the union of all sets $L_z$ with $z \in Z$, because every leaf bubble is adjacent to a vertex in $Z$ due to Rule \ref{rule::cc}, hence $L= \bigcup_{z \in Z} L_z$.

The following lemma provides three nice properties of the graph $H_{Z \cup B}=(\mathcal{D}_{Z \cup B}, E_{\mathcal{D}_{Z \cup B}})$ which helps us to bound the number of leaf bubbles in $L \subseteq \mathcal{D}_{Z}^l$. To memorize: The set $\mathcal{D}_{Z \cup B}$ is the set of bubbles in 
$G-(Z \cup B)-S$ and two bubbles $I,J$ are adjacent in $H_{Z \cup B}$ if and only if $E(V_I, V_J) \cap S \neq \emptyset$.

\begin{lemma} \label{lemma::properties_H_(ZuB)}
    The graph $H_{Z \cup B}$ has the following properties:
    \begin{enumerate}
        \item For each bubble $I \in \mathcal{D}_{Z \cup B}$ there exists a bubble $J \in \mathcal{D}_Z$, 
                            such that $V_I \subseteq V_J$. \label{lemma::all_bubbles}
        \item For each leaf bubble $J \in \mathcal{D}_Z$ there exists a leaf bubble 
                        $I \in \mathcal{D}_{Z \cup B}$, such that $V_I = V_J$. \label{lemma::leaf_bubbles}
        \item Let $I,J \in L$ and $K \in \mathcal{D}_{Z \cup B}^i$, such that 
                        $\{I,K\},\{J,K\} \in E_{\mathcal{D}_{Z \cup D}}$.
                        For all $z \in Z$ it holds that $z \notin N_G(V_I)$ or $z \notin N_G(V_J)$.
                        \label{lemma::bound_L}
    \end{enumerate}
\end{lemma}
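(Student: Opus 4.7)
My plan is to prove the three parts in turn, leveraging the notation and structural facts already established.

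Part~\ref{lemma::all_bubbles} is essentially a monotonicity statement. Deleting the additional vertices of $B$ from $G-Z-S$ can only refine the partition into connected components, so every component $I$ of $G-(Z\cup B)-S$ sits inside a unique component $J$ of $G-Z-S$, giving $V_I\subseteq V_J$.

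For part~\ref{lemma::leaf_bubbles}, the key observation is that $B=\bigcup_{z\in Z}B_z$ lies entirely inside $\bigcup_{z\in Z}V_z^i$, so every vertex of $B$ belongs to an inner bubble of $\mathcal{D}_Z$; in particular $V_J\cap B=\emptyset$ whenever $J$ is a leaf bubble of $\mathcal{D}_Z$. Hence $V_J$ is still connected in $G-(Z\cup B)-S$, and by part~\ref{lemma::all_bubbles} no vertex outside $V_J$ can join its component, so $V_J$ equals some $V_I$ with $I\in\mathcal{D}_{Z\cup B}$. To see that $I$ is a leaf in $H_{Z\cup B}$, consider the unique $S$-edge incident with $V_J$ in $H_Z$: its endpoint outside $V_J$ lies in $V(S)$, and Lemma~\ref{lemma::z-flower} guarantees $B\cap V(S)=\emptyset$, so this $S$-edge survives in $H_{Z\cup B}$, and clearly no new $S$-edge from $V_J$ can appear, giving $\deg_{H_{Z\cup B}}(I)=1$.

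Part~\ref{lemma::bound_L} is the main difficulty, and I would attack it by contradiction. Suppose some $z\in Z$ satisfies $z\in N_G(V_I)\cap N_G(V_J)$, so that $I,J\in L_z$. By part~\ref{lemma::all_bubbles}, $V_K\subseteq V_{K^{\ast}}$ for a unique $K^{\ast}\in\mathcal{D}_Z$, and the actual $S$-edges witnessing $\{I,K\},\{J,K\}\in E_{\mathcal{D}_{Z\cup B}}$ also connect $V_I$ and $V_J$ to $V_{K^{\ast}}$ in $H_Z$. Since $I,J$ are leaves in $H_Z$, their unique $H_Z$-neighbor must be $K^{\ast}$, making $K^{\ast}$ inner and adjacent to $L_z$, so $V_{K^{\ast}}\subseteq V_z^i$. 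I then build an $S$-cycle through $z$ entirely inside $G_z$: go $z\to I$ via an edge of $E_{H_Z^+}(z,L_z)$, then $I\to w_I$ where $w_I\in V_K$ is the $V_K$-endpoint of the first $S$-edge, walk inside $V_K$ along non-$S$-edges to the $V_K$-endpoint $w_J$ of the second $S$-edge (such a walk exists because $V_K$ is connected in $G-(Z\cup B)-S$ and sits inside $V_z^i$), and close the cycle symmetrically via $w_J\to J\to z$. This cycle touches no vertex of $B_z$: the bubble-nodes $I,J\in L_z$ are disjoint from $V_z^i\supseteq B_z$ in $G_z$, while $V_K\cap B=\emptyset$ since $V_K$ is a bubble of $\mathcal{D}_{Z\cup B}$. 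This contradicts the fact that $B_z$ is a $z$-blocker in $G_z$, completing the proof.
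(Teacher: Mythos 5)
Your proposal is correct and takes essentially the same route as the paper's own proof: Part 1 by the observation that adding vertices to the deletion set only refines connected components, Part 2 by noting $B\subseteq\bigcup_z V_z^i$ is disjoint from all leaf bubbles, and Part 3 by contradiction, extracting an $L_z$-path from $I$ through $V_K$ to $J$ in $G_z$ that avoids $B_z$. You spell out the last step in more explicit detail (identifying $K^*\in\mathcal{D}_Z$ with $V_K\subseteq V_{K^*}\subseteq V_z^i$ and verifying that the walk inside $V_K$ stays off $B$), but the underlying argument is the one the paper uses.
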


\begin{proof}
Property \ref{lemma::all_bubbles} holds because the set $B$ only splits bubbles of $G-Z-S$ further (because we are now looking at deleting $Z\cup B$ from $G-S$) and does not merge any two 
bubbles. 
Property \ref{lemma::leaf_bubbles} follows from the fact that the set $B$ is disjoint from the set of leaf bubbles.
Next we show Property \ref{lemma::bound_L} by contradiction. We assume that some $z\in Z$ is in $N_G(V_I)$ and in $N_G(V_J)$. Then $I$ and $J$ are both vertices of the graph $G_z$ and hence both are contained in the set 
$L_z$. The consequence is that there exists an $L_z$ path from bubble $I$ over bubble $K$ to bubble $J$ in $H_{Z \cup B}$ which can be extended to a $L_z$ path in $G_z$ not containing any vertex in $B$; this contradicts the fact that $B_z \subseteq B$ blocks all $L_z$-paths in $G_z$.
\end{proof}

From Lemma \ref{lemma::properties_H_(ZuB)} it follows that $L \subseteq \mathcal{D}_{Z \cup B}^l$; thus we can use $H_{Z \cup B}$ to bound the number of leaf bubbles in $L$.
Let $\boldsymbol{I} =\{ J \in \mathcal{D}_{Z \cup B}^i \mid E(L,J) \neq \emptyset \}$ be the set of inner bubbles in $H_{Z \cup B}$ that are adjacent to a leaf bubble in $L$. 
Clearly the number of edges between $\boldsymbol{I}$ and $L$ in $H_{Z \cup B}$ equals the number $|L|$. Instead of again using a matching to reduce this number we consider more carefully the properties of these edges.
For this we define the property of seeing a pair in a slightly different way. 
Let $e=\{I,J\}$ be an edge with $I \in \boldsymbol{I}$ and $J \in L$. We say that $e=\{I,J\}$ with $I\in \boldsymbol{I}$ and $J\in L$ sees the pair $\{x,y\}$ of different vertices $x\in Z\cup B$ and $y\in Z$, if $\{I,x\},\{J,y\} \in E(H_{Z \cup B}^+)$. Observe that a bubble in $L$ is never adjacent to a vertex in $B$ in the graph $H_{Z \cup B}$, because $B \subseteq \bigcup_{z \in Z} V_z^i \setminus V(S)$.

\begin{description}
    \item[Rule \namedlabel{rule::addpair2}{9}:] If at least $(k+2)$ edges 
                $\{I_1,J_1\}, \{I_2,J_2\}, \ldots \{I_l,J_l\}$ with $l \geq k+2$, $I_i \in \boldsymbol{I}$ and $J_i \in L$ for $1 \leq i \leq l$ see a pair $\{x,y\}$ of different vertices, such that $x \in Z \cup B$ is adjacent to $I_i$, $y \in Z$ is adjacent to $J_i$ for all $i \in \{1,2,\ldots,l\}$, then we add $\{x,y\}$ to the set of pair-constraints $\mathcal{P}$.
\end{description}

At first sight Rule \ref{rule::addpair1} and \ref{rule::addpair2} may seem somewhat similar, but on closer inspection on can observe a decisive difference. In Rule \ref{rule::addpair2} we consider only edges between two disjoint sets of bubbles, whereas the edges in $M$ can be between two inner bubbles, an inner bubble and a leaf bubble, or between two leaf bubbles. For this reason we can require in Rule \ref{rule::addpair2} that all bubbles in $\boldsymbol{I}$ are adjacent to $x$ and all bubbles in $L$ are adjacent to $y$; this is not possible in Rule \ref{rule::addpair1}. We will see later that we need the definite assignment of the bubbles to the vertices in $Z \cup B$ by applying Rule \ref{rule::addpair2}. 

\begin{description}
    \item[Rule \namedlabel{rule::useless2}{10}:] If there exists an edge $e=\{I,J\}$ with 
                $I \in \boldsymbol{I}$ and $J \in L$ such that $e$ sees no single vertex $z \in B \cup Z$ and for every pair $\{x,y\}$ seen by $e$ the pair $\{x,y\}$ is a pair-constraint in $\mathcal{P}$, then remove $e_S$ from $S$, delete $J$ from $L$ and replace $I$ by $I \cup J$ in $\boldsymbol{I}$.
\end{description}

If we delete an edge $e=\{I,J\}$ from $S$ by applying Rule \ref{rule::useless2}, then the consequence is that
bubbles $I$ and $J$ are now merged into a single bubble. Anyhow, it is sufficient to continue with Rule \ref{rule::addpair2}, because $M$ is still a matching that covers all inner bubbles in the current graph $H_Z$ and $B$ still has the properties of Lemma \ref{lemma::properties_H_(ZuB)} with respect to the current graph $H_{Z \cup B}$.
That the edge set $M$ is still a matching in $H_Z$ holds because we never delete an edge in $M$ or an endpoint of an edge in $M$; we only merge an endpoint of an edge in $M$ with an unmatched leaf bubble in $L$. The first two properties of Lemma \ref{lemma::properties_H_(ZuB)} obviously hold with respect to the current graph $H_Z$. That Property \ref{lemma::bound_L} also holds follows from the fact that the leaf bubbles that are still in $L$ are the same as before and adjacent to the same inner bubbles as before.

\subparagraph{The reduction rules are safe.}
First we show that our reduction rules are safe, i.e. that there exists a solution for $(G,S,\mathcal{P},k)$ if and only if there exists a solution for $(G',S',\mathcal{P}',k')$.
Note that Rules \ref{rule::k<=0}, \ref{rule::cc}, and \ref{rule::flower} are obviously safe and Rule \ref{rule::bridge} is safe because for every $S$-cycle  through an edge $e \in S$ that is a bridge in 
$(V,E \setminus ( S \setminus \{e\}))$ there is another $S$-edge $e'$ on the cycle.
Let us consider the set $\mathcal{P}$ of pair-constraints to see that Rules \ref{rule::pair1} and \ref{rule::pair2} are safe. The set $\mathcal{P}$ naturally leads to the graph $P=(V(\mathcal{P}),\mathcal{P})$ and has the property that we have to pick at least on vertex of each pair-constraint for a solution for $(G,S,\mathcal{P},k)$. Hence any solution for $(G,S,\mathcal{P},k)$ must contain a vertex cover of $P$. 
Thus, Rules \ref{rule::pair1} and \ref{rule::pair2} are direct analogues of classical reduction rules for the \problem{Vertex Cover} problem, and hence safe.
To show that the other rules are safe, we first show a technical Lemma about a property of edges in $H_{Z \cup B}$.

\begin{lemma} \label{lemma::disjoint_path_L}
    If two different edges $\{I_1,J_1\}$ and $\{I_2,J_2\}$ in $H_{Z \cup B}$ with 
    $I_1,I_2 \in \boldsymbol{I}$, $J_1,J_2 \in L$ see a vertex $z\in Z$, respectively a pair $\{x,y\}$ with $x\in Z\cup B$ and $y\in Z$ such that $\{x,I_1\}, \{x,I_2\}, \{y,J_1\}, \{y,J_2\} \in E(H_{Z \cup B})$, then it holds that they are disjoint, i.e. that $I_1 \neq I_2$ and $J_1 \neq J_2$.
\end{lemma}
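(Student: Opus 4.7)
The plan is to argue by contradiction: assume the two edges are not disjoint, so either $I_1 = I_2$ or $J_1 = J_2$, and derive a contradiction in each case using Lemma \ref{lemma::properties_H_(ZuB)}. Since the two edges are assumed different, exactly one of these equalities can hold without already making the edges coincide.

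First I would dispose of the case $J_1 = J_2$ using Property \ref{lemma::leaf_bubbles} of Lemma \ref{lemma::properties_H_(ZuB)}: every bubble in $L$ corresponds to a leaf bubble in $\mathcal{D}_{Z \cup B}^l$, so the text's observation $L \subseteq \mathcal{D}_{Z \cup B}^l$ tells us that $J_1$ has degree exactly one in $H_{Z \cup B}$. Hence the unique neighbor of $J_1$ in $\boldsymbol{I}$ equals $I_1 = I_2$, and the two edges coincide, contradicting the hypothesis that they are different. This argument is the same regardless of whether the two edges see a single vertex $z \in Z$ or a pair $\{x, y\}$.

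The main case is $I_1 = I_2 =: K$ with $J_1 \neq J_2$. Here I would use Property \ref{lemma::bound_L}: since $J_1, J_2 \in L$ are both adjacent to $K \in \boldsymbol{I} \subseteq \mathcal{D}_{Z \cup B}^i$, no single vertex of $Z$ can lie in both $N_G(V_{J_1})$ and $N_G(V_{J_2})$. In the single-vertex case, the hypothesis "both edges see $z \in Z$" translates (via the definition of $H_{Z \cup B}^+$) into $z \in N_G(V_{J_1}) \cap N_G(V_{J_2})$, directly contradicting Property \ref{lemma::bound_L}. In the pair case, the hypothesis gives $\{y, J_1\}, \{y, J_2\} \in E(H_{Z \cup B}^+)$ with $y \in Z$, which again forces $y \in N_G(V_{J_1}) \cap N_G(V_{J_2})$ and contradicts Property \ref{lemma::bound_L}; note that the component of the pair lying in $Z$ (namely $y$) is exactly what enables the application, which is why the asymmetric typing $x \in Z \cup B$, $y \in Z$ in the lemma's hypothesis matters.

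There is no real obstacle; the only thing to be careful about is matching the definitions of "sees" in $H_{Z \cup B}^+$ to the adjacency-in-$G$ statement used by Property \ref{lemma::bound_L}, and to check that the leaf bubbles in the sense of $\mathcal{D}_Z^l$ are still leaves in $H_{Z \cup B}$ (which is what the inclusion $L \subseteq \mathcal{D}_{Z \cup B}^l$ provides). With those translations in place, both cases fall to Lemma \ref{lemma::properties_H_(ZuB)} in one line each.
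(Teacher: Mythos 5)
Your proof is correct and matches the paper's argument: case-split on $I_1=I_2$ versus $J_1=J_2$, dispose of $J_1=J_2$ because bubbles in $L$ are leaves of $H_{Z\cup B}$ (so they have a unique neighbor), and dispose of $I_1=I_2$ by applying Property \ref{lemma::bound_L} of Lemma \ref{lemma::properties_H_(ZuB)} to the common neighbor $z$ (or $y$) in $Z$. You also correctly identify that the vertex constrained by Property \ref{lemma::bound_L} is the one adjacent to the two leaf bubbles $J_1,J_2$, i.e.\ $z\in N_G(V_{J_i})$; the paper writes $V_{I_i}$ at that spot, which appears to be a typo, and your reading is the intended one.
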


\begin{proof}
    We first assume that $I_1 = I_2$. This implies that $J_1$ and $J_2$ are leaf bubbles in $L$ which are adjacent to the same inner bubble $I=I_1=I_2$ in $H_{Z \cup B}$. For $J_1$ and $J_2$ it must hold that $z \in N_G(V_{I_i})$ respectively $y \in N_G(V_{I_i})$ for $i=1,2$. But this is a contradiction to Property \ref{lemma::bound_L} of Lemma \ref{lemma::properties_H_(ZuB)}.
    On the other hand if $J_1=J_2$, then $I_1=I_2$ because every leaf bubble in $L$ sees only one 
    other bubble.
\end{proof}

To show that Rules \ref{rule::addpair1} and \ref{rule::addpair2} are safe, we have to prove that we only add a pair $\{x,y\}$ of vertices to the set $\mathcal{P}$ of pair-constraints if either $x$ or $y$ must be in each solution of size at most $k$. The $(k+2)$ edges that see a pair $\{x,y\}$ are pairwise disjoint, because $M$ is a matching and Lemma \ref{lemma::disjoint_path_L} holds. 
Hence we have at least $(k+2)$ disjoint $x$-$y$ paths in $H_Z^+$ respectively $H_{Z \cup B}^+$ which we can 
extend to at least $(k+2)$ disjoint $x$-$y$ paths in $G$. This is the reason why at least one of $x$ and $y$ must be in any solution and it is safe to add $\{x,y\}$ to $\mathcal{P}$ as a pair-constraint. 

It remains to show that Rules \ref{rule::useless1} and \ref{rule::useless2} are safe. For this we prove that the edges that we delete in these rules are irrelevant. First we prove the following lemma.
\begin{lemma} \label{lemma::correctness_irrelevant}
    Let $Y \subseteq V \setminus V(S)$ be a superset of $Z$, hence $G-Y$ contains no $S$-cycle.
    If $e=\{I,J\} \in H_Y$ sees no single vertex $y \in Y$ and for every pair $\{x,y\}$ with $x,y \in Y$ seen by $e$ the pair $\{x,y\}$ is a pair-constraint in $\cP$, then $e_S=\{v_I,v_J\}$ is irrelevant for the instance $(G,S,\cP,k)$.
\end{lemma}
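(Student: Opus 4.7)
The plan is to establish the two implications of the irrelevance equivalence. The easier direction — that any solution of $(G,S,\cP,k)$ is also a solution of $(G,S\setminus\{e_S\},\cP,k)$ — is immediate, because the pair-constraints are unchanged and every $(S\setminus\{e_S\})$-cycle is also an $S$-cycle, so breaking all $S$-cycles breaks all $(S\setminus\{e_S\})$-cycles. The main work is the converse, where I would argue by contradiction.

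Assume $X$ is a solution of $(G,S\setminus\{e_S\},\cP,k)$ that fails for $(G,S,\cP,k)$. Since $|X|\le k$ and $X$ satisfies every pair-constraint in $\cP$, the only way $X$ can fail is that some $S$-cycle $C$ survives in $G-X$. This $C$ must contain $e_S$ (else $C$ would be an $(S\setminus\{e_S\})$-cycle in $G-X$), and in fact $C$ cannot contain any other edge of $S$ (otherwise again $C$ would be an $(S\setminus\{e_S\})$-cycle). Hence $P:=C-e_S$ is a path from $v_I$ to $v_J$ in $G-X-S$, and my task reduces to exhibiting a structural consequence of such a $P$ that contradicts one of the two hypotheses of the lemma.

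The core observation is that the bubbles $V_I$ and $V_J$ are distinct connected components of $G-Y-S$, while $P$ avoids $S$-edges entirely; therefore $P$ can only transition between bubbles via vertices of $Y$. I would let $y^*$ be the first vertex of $Y$ encountered along $P$ starting from $v_I$ and $y^{**}$ the last vertex of $Y$ encountered before reaching $v_J$. Both exist because $v_I,v_J\in V(S)$ lie outside $Y$ (as $Y\cap V(S)=\emptyset$) and live in different bubbles of $G-Y-S$. The initial segment of $P$ up to $y^*$ consists only of non-$Y$ vertices that are mutually connected in $G-Y-S$ and thus sit in the bubble of $v_I$, i.e.\ in $V_I$; hence $y^*$ has a neighbour in $V_I$, giving $\{y^*,I\}\in E(H_Y^+)$. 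Symmetrically $\{y^{**},J\}\in E(H_Y^+)$.

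A case split now closes the argument. If $y^*=y^{**}$, this single $Y$-vertex is adjacent in $H_Y^+$ to both $I$ and $J$, so $e$ sees the vertex $y^*\in Y$ — contradicting the first hypothesis. Otherwise $y^*\neq y^{**}$, and $e$ sees the pair $\{y^*,y^{**}\}\subseteq Y$; by hypothesis this pair lies in $\cP$, yet both $y^*$ and $y^{**}$ are on $P\subseteq G-X$ and so neither is in $X$, violating the pair-constraint — contradicting that $X$ solves $(G,S\setminus\{e_S\},\cP,k)$. The main subtlety I expect is justifying the existence of $y^*,y^{**}$ and their adjacencies to the right bubbles: both points rely essentially on $V(S)\cap Y=\emptyset$ together with the definition of bubbles as the components of $G-Y-S$, which pins $v_I,v_J$ into $V_I,V_J$ and forces any bubble-to-bubble excursion along $P$ to pass through $Y$.
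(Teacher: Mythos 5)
Your proof is correct and follows essentially the same approach as the paper: identify that the offending cycle uses only $e_S$ from $S$, then locate the first and last $Y$-vertex on the remaining $v_I$-$v_J$ path to produce either a single seen vertex or a seen pair, yielding the contradiction. The paper packages the bubble-exit observation as a separate claim, but the underlying argument is identical.
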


\begin{proof}
    Let $e=\{I,J\} \in H_Y$ be an edge with the properties of the lemma and $e_S$ the single edge in $E(V_I,V_J) \cap S$. To show that $e_S=\{v_I,v_J\}$ is irrelevant for instance $(G,S,\cP,k)$ we have to show that $X \subseteq V(G)$ is a solution for $(G,S,\cP,k)$ if and only if $X$ is a solution for $(G,S \setminus \{e_S\},\cP,k)$.
    Since every solution $X$ for $(G,S,\cP,k)$ is also a solution for $(G,S \setminus \{e_S\},\cP, k)$, we only have to show the other direction.
    
    Let $X$ be a solution for $(G,S \setminus \{e_S\},\cP, k)$. We assume that there exists an $S$-cycle $C$ in $G-X$. This $S$-cycle $C$ can only contain the $S$-edge $e_S$; otherwise would $C$ be an $(S \setminus \{e_S\})$-cycle which contradicts the fact that $X$ is a solution for $(G,S \setminus \{e_S\},\cP,k)$.
    
    \begin{claim} \label{claim::cycle_sees_pair}
        If an $S$-cycle $C$ in $G$ only contains the $S$-edge $e_S$, then there exists either a vertex $y \in Y$ such that $e$ sees the single vertex $y$ and $y$ is contained in cycle $C$ or two different vertices $x,y \in Y$ such that $e$ sees the pair $\{x,y\}$ and cycle $C$ contains $x$ and $y$.
    \end{claim}
    
    \begin{proof}
        Let $C$ be an $S$-cycle with the properties of the claim.
        Thus $C$ must exit bubble $I$ and bubble $J$ by edges that end in $Y$, because this is the only way to obtain a path from $v_I$ to $v_J$ that uses no edge from $S$.
        If these two edges share their endpoint $y$ in $Y$, then $e$ sees the single vertex $y$ and $y$ is contained in $C$. On the other hand if these two edges have different endpoints $x,y$ in $Y$, then $e$ sees the pair $\{x,y\}$ and the vertices $x,y$ are contained in $C$.
    \end{proof}
    
    Based on Claim \ref{claim::cycle_sees_pair}, it follows that edge $e=\{I,J\}$ must see a single vertex $y \in Y$ that is contained in $C$ or a pair $\{x,y\}$ with $x,y \in Y$ such that $x,y$ are contained in $C$. From the properties of edge $e$ follows that $e$ sees no single vertex and every pair $\{x,y\}$ that is seen by $e$ must be contained in a pair-constraint. Let $\{x,y\}$ be the pair that is seen by $e$ such that $x,y$ are vertices of cycle $C$ (Claim \ref{claim::cycle_sees_pair}). But at least one vertex of the pair $\{x,y\}$ must be in the solution $X$ for $(G,S \setminus \{e_S\},\mathcal{P},k)$, since $e$ sees only pairs that are contained in the set $\mathcal{P}$ of pair-constraints; hence $C$ is no cycle in $G-X$.
\end{proof}

From Lemma \ref{lemma::correctness_irrelevant} follows that we only delete an edge $e_S$ in Rule \ref{rule::useless1} and \ref{rule::useless2} when $e_S$ is irrelevant for instance $(G,S,\cP,k)$;
this holds because $Y=Z$ respectively $Y=Z \cup B$ is a superset of $Z$.

\subparagraph{Applying the Rules.}
First we show that if none of the rules can be applied, then the size of $S$ is bounded by $\Oh(k^4)$.
For this we prove two lemmas. One bounds the size of $M$ which helps us to bound the number of inner 
bubbles and the other bounds the number of leaf bubbles in $L$.

\begin{lemma}
    If the matching $M$ covers all inner bubbles in $H_Z$ and we cannot apply Rules \ref{rule::k<=0} through
    \ref{rule::useless1}, then the size of $M$ is at most $\Oh(k^3)$.
\end{lemma}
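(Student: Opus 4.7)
The plan is to partition the edges of $M$ into two classes according to whether an edge sees a single vertex of $Z$, and to bound each class separately using the inapplicability of Rules~\ref{rule::flower}, \ref{rule::addpair1}, and \ref{rule::useless1}, together with the $8$-approximation bound $|Z|\leq 8k$.

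For the first class (edges of $M$ that see at least one single vertex), the key claim to establish is that no fixed vertex $z\in Z$ is seen as a single vertex by more than $k$ edges of $M$. Suppose for contradiction that $k+1$ edges $\{I_1,J_1\},\ldots,\{I_{k+1},J_{k+1}\}\in M$ each see the single vertex $z$. For each $i$, the associated $S$-edge $(e_i)_S=\{v_{I_i},v_{J_i}\}$, concatenated with a bubble-internal path in $V_{I_i}$ from $v_{I_i}$ to a neighbour of $z$ and a bubble-internal path in $V_{J_i}$ from $v_{J_i}$ to a neighbour of $z$, yields an $S$-cycle $C_i$ through $z$ (the interior paths are $S$-free by the definition of a bubble, and they exist because $\{I_i,z\},\{J_i,z\}\in E(H_Z^+)$). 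Since $M$ is a matching, the $2(k+1)$ bubble endpoints are pairwise distinct, so the cycles $C_1,\ldots,C_{k+1}$ meet only at $z$, giving a $z$-flower of order $k+1$ in $G$ and contradicting the inapplicability of Rule~\ref{rule::flower}. Summing over $z\in Z$ bounds the first class by $|Z|\cdot k\leq 8k^2$.

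For the second class (edges of $M$ that see no single vertex of $Z$), the inapplicability of Rule~\ref{rule::useless1} guarantees that each such edge sees at least one pair $\{x,y\}$ of distinct vertices in $Z$ that is not contained in $\mathcal{P}$; charge the edge to some such pair. The inapplicability of Rule~\ref{rule::addpair1} then implies that every pair $\{x,y\}\notin\mathcal{P}$ is seen by at most $k+1$ edges of $M$, so at most $k+1$ edges are charged to any single pair. Since the number of unordered pairs of distinct vertices of $Z$ is at most $\binom{8k}{2}\in\Oh(k^2)$, the second class has size at most $(k+1)\binom{8k}{2}\in\Oh(k^3)$. Combining the two bounds gives $|M|\in\Oh(k^3)$.

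The only delicate step is the cycle construction used to derive the contradiction with Rule~\ref{rule::flower}: one must check that the interior paths really stay within a single bubble (so that they avoid $S$) and that the cycles from different matching edges share only $z$. Both facts are immediate—the first from the definition of a bubble as a connected component of $G-Z-S$, the second from the matching property of $M$—so no genuine obstacle arises; the remainder is routine counting against the two ``see-many'' thresholds enforced by Rules~\ref{rule::addpair1} and~\ref{rule::flower}.
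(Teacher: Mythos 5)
Your proof is correct and follows essentially the same approach as the paper: you partition $M$ into edges seeing a single vertex of $Z$ (bounded by $k|Z|$ via the inapplicability of Rule~\ref{rule::flower}) and edges seeing only pairs (each must see a pair not in $\mathcal{P}$ by Rule~\ref{rule::useless1}, bounded by $(k+1)\binom{|Z|}{2}$ via Rule~\ref{rule::addpair1}), then use $|Z|\leq 8k$. The only cosmetic difference is that you spell out the construction of the $z$-flower from the $k+1$ matching edges, which the paper only sketches.
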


\begin{proof}
    Each edge in $M$ sees either a pair of vertices in $Z$ or a single vertex in $Z$. The number of pairs in $Z$ is at most $\binom{|Z|}{2} \leq |Z|^2$. Therefore the number of pairs in $Z$ that are not in the set $\mathcal{P}$ of pair-constraints is at most $|Z|^2$. Because we cannot apply Rule \ref{rule::addpair1}, at most $(k+1)$ edges in $M$ see any pair that is not in the set of pair-constraints. Thus at most $(k+1)|Z|^2$ edges of $M$ can see a pair of vertices in $Z$ that is not in $\mathcal{P}$.
    The number of edges in $M$ that see a single vertex in $Z$ is at most $k |Z|$; otherwise we can apply Rule \ref{rule::flower}, because at least one single vertex $z$ in $Z$ is seen by at least $k+1$ edges from $M$ and these edges together with $z$ are a $z$-flower of order $k+1$ in $H_Z^+$ which we can expand to a $z$-flower of order $k+1$ in $G$. Since we cannot apply Rules \ref{rule::flower}, \ref{rule::addpair1} or \ref{rule::useless1}, this leads to at most $(k+1)|Z|^2 + k |Z| \in \Oh(k^3)$ edges in $M$, because $|Z| \leq 8k$.
\end{proof}

From the lemma it follows that the number of inner bubbles in $H_Z$ is at most $2|M| \in \Oh(k^3)$.

\begin{lemma}
    If we cannot apply Rules \ref{rule::k<=0} through \ref{rule::useless2} then the size of $L$ is bounded by $\Oh(k^4)$.
\end{lemma}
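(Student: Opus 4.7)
The plan is a two-category counting argument. By Lemma~\ref{lemma::properties_H_(ZuB)}, every $J \in L$ remains a leaf in $H_{Z \cup B}$ and its unique neighbour there lies in $\boldsymbol{I}$, so $|L|$ equals the number of edges in $H_{Z \cup B}$ between $\boldsymbol{I}$ and $L$. Since Rule~\ref{rule::useless2} no longer applies, every such edge $e = \{I,J\}$ either sees some single vertex in $Z \cup B$, or sees some pair $\{x,y\} \notin \mathcal{P}$. I will bound these two families separately.

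For edges seeing a single vertex, I would first invoke the remark preceding Rule~\ref{rule::addpair2}: a bubble in $L$ has no $B$-neighbour in $H_{Z \cup B}^+$, so any single vertex $z$ seen by such an edge must already lie in $Z$. Lemma~\ref{lemma::disjoint_path_L} then gives that distinct edges of this form sharing the same $z$ are pairwise vertex-disjoint, so $k+1$ of them would yield $k+1$ $S$-cycles through $z$ meeting only at $z$, i.e.\ a $z$-flower of order $k+1$ in $G$; Rule~\ref{rule::flower} would then apply. Hence at most $k$ edges see any fixed $z \in Z$, giving at most $k \cdot |Z| \in \Oh(k^2)$ edges of this kind.

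For edges seeing some pair $\{x,y\} \notin \mathcal{P}$, the same absence of $B$-neighbours on the leaf side forces $y \in Z$ (the coordinate adjacent to $J$), while $x \in Z \cup B$ as required by Rule~\ref{rule::addpair2}. Since that rule does not apply, every fixed such pair is seen by at most $k+1$ edges. The number of candidate pairs is bounded by $|Z \cup B| \cdot |Z|$; using $|Z| \leq 8k$ and $|B| \leq 2k \cdot |Z| \in \Oh(k^2)$, this product is $\Oh(k^3)$, yielding at most $\Oh(k^4)$ edges of the second kind. Summing the two contributions gives $|L| \in \Oh(k^4)$.

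The main point to handle carefully is the ``orientation'' of the seen single vertex or pair relative to the edge $\{I,J\}$: in both categories one must confirm that the coordinate attached to the leaf side actually lies in $Z$, so that the preceding rules (Rule~\ref{rule::flower} on $Z$-flowers and Rule~\ref{rule::addpair2} which requires $y \in Z$) are applicable. This is exactly the content of the observation that $B \subseteq \bigcup_{z \in Z} V_z^i \setminus V(S)$ is disjoint from the $H_{Z \cup B}^+$-neighbourhoods of bubbles in $L$, which I would cite explicitly in the write-up.
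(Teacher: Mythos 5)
Your proof is correct and follows essentially the same route as the paper: you bound $|L|$ by the number of $\boldsymbol{I}$--$L$ edges in $H_{Z\cup B}$, split these by whether they see a single vertex or a pair not in $\mathcal{P}$, invoke Lemma~\ref{lemma::disjoint_path_L} and Rule~\ref{rule::flower} for the first family and Rule~\ref{rule::addpair2} for the second, and combine with $|Z|\in\Oh(k)$ and $|B|\in\Oh(k^2)$. Your explicit remark that the leaf-side coordinate must lie in $Z$ (because bubbles in $L$ have no $B$-neighbours in $H_{Z\cup B}^+$) makes the orientation step a little more careful than the paper's phrasing, but the argument is otherwise identical.
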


\begin{proof}
    We claim that the number of edges between bubbles in $\boldsymbol{I}$ and bubbles in $L$ is at most
    $(k+1)|Z|(|B|+|Z|)+k|Z|$, if no rule is applicable. 
    This implies that there are at most $\Oh(k^4)$ leaf bubbles in $L$.
    
    Each edge between bubbles in $\boldsymbol{I}$ and bubbles in $L$ sees a pair $\{x,y\}$, such that $\{x,I\},\{y,J\} \in E(H_{Z \cup B})$ with $x \in Z \cup B$ is adjacent to $I$, $y\in Z$ is adjacent to $J$ or a vertex $z$ in $Z$; hence the number of pairs is at most $|Z| (|Z|+|B|)$. Rule \ref{rule::addpair2} adds $\{x,y\}$ to $\mathcal{P}$ if at least $(k+2)$ edges $\{I_1,J_1\},\{I_2,J_2\}, \ldots, \{I_l,J_l\}$ with $l \geq k+1$, $I_i \in \boldsymbol{I}$ and $J_i \in L$ for $1 \leq i \leq l$ see the pair $\{x,y\}$ such that $x \in Z \cup B$ is adjacent to $I_i$ and $y \in Z$ is adjacent to $J_i$ for $1 \leq i \leq l$. This bounds the number of edges between vertices in $\boldsymbol{I}$ and $L$ which see a pair, whose vertices are not a pair in the set $\mathcal{P}$ of pair-constraints, by $(k+1) |Z| (|Z|+|B|)$. 
    The number of edges between vertices in $\boldsymbol{I}$ and $L$ that see a certain vertex $z$ is at most $k$, otherwise the at least $k+1$ edges between $\boldsymbol{I}$ and $L$ that see vertex $z$ together with vertex $z$ form a $z$-flower of order $k+1$ in $H_{Z \cup B}^+$ because Lemma \ref{lemma::disjoint_path_L} ensures that the edges are disjoint. But then we can apply Rule \ref{rule::flower} and delete vertex $z$. Hence at most $k |Z|$ edges between vertices $\boldsymbol{I}$ and $L$ can see a vertex in $Z$.
    This leads to at most $(k+1)|Z|(|B|+|Z|)+k|Z|$ edges between vertices in $\boldsymbol{I}$ and $L$, because we cannot apply Rules \ref{rule::flower}, \ref{rule::addpair2} or \ref{rule::useless2}; this implies that $|L| \in \Oh(k^4)$, because $|Z| \leq 8k$ and $|B| \leq 2k|Z| \leq 16 k^2$.
\end{proof}

If we combine these two results, we know that $|\mathcal{D}_Z^i| + |\mathcal{D}_Z^l| \in \Oh(k^4)$.
As mentioned above this is an upper bound for the number of edges in $S$, because $H_Z$ is a forest, because there is at most one edge of $S$ between any two bubbles, and because $V(S) \cap Z =\emptyset$.

Finally we have to prove that we can perform the reduction in polynomial time. First we prove that each rule is applied a polynomial number of times and second that every single rule application can be performed in polynomial time.

\begin{lemma}
    Each reduction rule is applied at most a number of times that is polynomially bounded in the input size.
\end{lemma}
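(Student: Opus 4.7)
My plan is to assign each rule a strictly decreasing nonnegative monovariant and conclude that each rule fires at most polynomially many times. The natural candidate is the lexicographic tuple $\mu = (k,\, |S|,\, |V|+|E|,\, -|\mathcal{P}|)$. Because each coordinate is a nonnegative integer polynomially bounded in the input size (using that $|\mathcal{P}|\le k^2$ is forced by Rule~\ref{rule::pair2}), only polynomially many strict lex-decreases of $\mu$ can occur, so this suffices.

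First I would dispatch the straightforward rules. Rules~\ref{rule::k<=0} and~\ref{rule::pair2} each output a trivial instance and so fire at most once. Rules~\ref{rule::pair1} and~\ref{rule::flower} strictly decrease $k$ (and delete a vertex), so each fires at most $k+1$ times before Rule~\ref{rule::k<=0} takes over. Rules~\ref{rule::bridge},~\ref{rule::useless1}, and~\ref{rule::useless2} each strictly decrease $|S|$, so together they fire at most $|S|$ times, since $|S|$ is never enlarged by any rule. Finally, Rule~\ref{rule::cc} strictly reduces $|V|+|E|$, which is polynomially bounded. Along the way I would verify the obvious claim that no rule can increase $k$, $|S|$, or $|V|+|E|$: vertex deletions, edge removals, and bubble merges only shrink the graph and $S$.

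The one subtlety is bounding Rules~\ref{rule::addpair1} and~\ref{rule::addpair2}, both of which strictly increase $|\mathcal{P}|$ while leaving $k$, $|S|$, and $|V|+|E|$ intact. Between two firings that decrease one of the earlier coordinates, Rules~\ref{rule::addpair1} and~\ref{rule::addpair2} together can fire at most $k^2$ times before $|\mathcal{P}|$ exceeds $k^2$ and Rule~\ref{rule::pair2} terminates. Summing over the polynomially many epochs induced by decreases in $(k,|S|,|V|+|E|)$ yields an overall bound of order $k^3$ applications. The only point requiring care is to confirm that Rule~\ref{rule::pair1} only removes pair-constraints involving the deleted vertex (so that $|\mathcal{P}|$ never jumps spuriously), and that $\mathcal{P}$ is treated as a set so that a given pair is never added twice; both follow directly from the definitions.
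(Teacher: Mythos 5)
Your proof is correct and takes essentially the same approach as the paper: every rule except Rules~\ref{rule::addpair1} and~\ref{rule::addpair2} strictly decreases one of $|V|$, $|E|$, $|S|$, or $k$ (or terminates), while Rules~\ref{rule::addpair1} and~\ref{rule::addpair2} only grow $\mathcal{P}$, which Rules~\ref{rule::pair1} and~\ref{rule::pair2} cap at roughly $k^2$ per value of $k$. Packaging this as a lexicographic monovariant is a clean reformulation, though note that the tighter $O(k^3)$ count on applications of Rules~\ref{rule::addpair1} and~\ref{rule::addpair2} comes from resetting the $|\mathcal{P}|$ budget only at $k$-decreases (not at every decrease of $|S|$ or $|V|+|E|$); your coarser epoch count still gives a polynomial bound, which is all the lemma needs.
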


\begin{proof}
Note that we reduce in each rule, except Rules \ref{rule::addpair1} and \ref{rule::addpair2}, the size of at least one of the sets $V$, $E$, $S$, the value $k$ or decide that no solution of size at most $k$ exists. 
In Rules \ref{rule::addpair1} and \ref{rule::addpair2} we add pair constraints to $\mathcal{P}$, but if $\mathcal{P}$ contains more than $k^2$ pair constraints, we either find a vertex $z \in V(\mathcal{P})$ that we delete in Rule \ref{rule::pair1} and reduce $k$ by one or we decide in Rule \ref{rule::pair2} that no solution of size at most $k$ exists. This bounds the number of pair constraints that we add to $\mathcal{P}$ during the reduction by $k^3$ because we can decrease $k$ at most $k$ times. 
Thus, each rule is applied at most a number of times that is polynomial in the input size.
\end{proof}

Next we show that each single rule application can be performed in polynomial time.
It is obvious that we can apply Rules \ref{rule::k<=0} through \ref{rule::pair2} in polynomial time. 
The following lemma addresses Rule \ref{rule::flower} by solving a matroid parity problem on an appropriate gammoid.
\begin{lemma}
Let $G=(V,E)$, $z\in V$, and $S\subseteq E$. A $z$-flower of maximum order, i.e., a maximum number of $S$-cycles that intersect only in $z$, can be found in (deterministic) polynomial time.
\end{lemma}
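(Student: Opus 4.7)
The plan is to reduce the problem to maximum matroid parity on an appropriately constructed gammoid, and then invoke the deterministic polynomial-time algorithm of Tong et al.~\cite{TongLV1984} for matroid parity on gammoids that is already mentioned in the earlier footnote.

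First I would build an auxiliary directed graph $H$ from $G-z$. For every $S$-edge $e=\{u,v\}$ with $u,v\neq z$ I would add two new ``sink-only'' vertices $p_e$ and $q_e$ together with arcs $u\to p_e$ and $v\to q_e$. The sink-only copies are essential because a single vertex may be the endpoint of several $S$-edges and we need to treat the two ``ports'' of each $S$-edge independently. Boundary cases are handled directly: an $S$-loop at $z$ is itself an $S$-cycle of length one through $z$ and contributes trivially to any flower; an $S$-edge $e=\{z,v\}$ is encoded by introducing only one sink-only copy $q_e$ adjacent to $v$ and marking that its ``other half'' is the edge $\{z,v\}$ itself; parallel edges incident to $z$ are encoded by duplicating the corresponding source vertex. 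Let $M$ be the gammoid on $H$ with source set the neighbours of $z$ (taken with multiplicity) and ground set $U=\bigcup_e\{p_e,q_e\}$.

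Next I would turn this into a matroid parity instance by pairing $\{p_e,q_e\}$ for every $S$-edge $e$. The key claim to verify is that the maximum order of a $z$-flower in $G$ equals the maximum number of pairs whose union is independent in $M$. Given $t$ such pairs, independence of their union in the gammoid $M$ yields $2t$ pairwise vertex-disjoint paths in $H$ from $N_G(z)$ to the chosen $p_{e_i},q_{e_i}$; removing the terminal arcs into the sink-only copies and prepending the corresponding incident edges at $z$ produces $t$ $S$-cycles through $z$ that meet only in $z$, i.e., a $z$-flower of order $t$. Conversely, each $S$-cycle through $z$ decomposes uniquely into two internally vertex-disjoint $z$-to-endpoint half-paths of some $S$-edge, so vertex-disjointness of the flower translates directly into vertex-disjointness of the corresponding paths in $H$ and hence into independence of the corresponding paired union in $M$.

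Since gammoids are linear matroids and Tong et al.~\cite{TongLV1984} give a deterministic polynomial-time algorithm for matroid parity on gammoids, the maximum matroid parity solution and thus a $z$-flower of maximum order can be computed in polynomial time, with the actual cycles being read off from the vertex-disjoint path family produced by the algorithm. I expect the main obstacle to be the careful encoding of the corner cases (loops, parallel edges at $z$, and $S$-edges incident to $z$) so that the bijection between parity solutions and $z$-flowers is \emph{exact} and not merely off by an additive constant; once these are handled by the modifications sketched above, the reduction to matroid parity together with the cited algorithm completes the proof.
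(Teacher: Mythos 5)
Your proposal is correct and takes essentially the same approach as the paper: reduce to matroid parity on a gammoid over $G-z$ with sources $N_G(z)$, pairs given by the $S$-edges, and invoke the deterministic algorithm of Tong et al.\ The only difference is cosmetic — the paper removes degeneracies (shared endpoints of $S$-edges, $S$-edges at $z$) by subdividing and by swapping $S$-edges incident with $z$ for the other edges at the opposite endpoint, whereas you use sink-only copies $p_e,q_e$ per $S$-edge, which is a clean alternative; just note that your sketch for an $S$-edge $\{z,v\}$ (a singleton $q_e$ whose ``other half'' is the edge itself) is not yet a valid matroid-parity pair and still needs to be realized, e.g.\ by pairing $q_e$ with a free coloop or by the paper's edge-replacement trick.
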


\begin{proof}
For simplicity, we assume that there are no edges of $S$ incident with $z$. If this is not the case, then it can be checked that for each neighbor $v\in N(z)$ with $\{v,z\}\in S$ removing $\{v,z\}$ from $S$ and adding instead all other edges incident with $v$ to $S$ gives the desired result. Furthermore, we assume that no two edges of $S$ are incident with the same vertex of $G$; this can be achieved by appropriate subdivision operations, without changing the maximum order of $z$-flowers.

Let $\{C_1,\ldots,C_t\}$ be a $z$-flower of order $t$. Each $C_i$ gives rise to a path $P_i$ between two different neighbors $u$ and $v$ of $z$; all these paths are fully vertex-disjoint. By our above assumption, there are no $S$-edges incident with $z$, hence, each $P_i$ must contain two consecutive vertices, say $s_i$ and $t_i$, with $\{s_i,t_i\}\in S$. In this way, each path $P_i$ can be split into two paths, $P_{i,s}$ and $P_{i,t}$, from $N(v)$ to $\{s_i,t_i\}$; all these $2t$ paths are pairwise vertex-disjoint and do not contain the vertex $z$. Thus, from any $z$-flower of order $t$ we get $2t$ vertex-disjoint paths in $G-z$ from $N(z)$ to $T\subseteq V(S)$, i.e., endpoints of $S$-edges, such that $T$ can be partitioned into $t$ two-sets of vertices that are also edges in $S$. In the language of gammoids this means that $T$ is an independent set in the gammoid on graph $G-z$, with sources $N(z)$, and ground set $V(S)$.

Conversely, any independent set $T$ in the mentioned gammoid implies the existence of $|T|$ vertex-disjoint paths in $G-z$ from $N(z)$ to $T$. If, as above, $T$ can be partitioned into edges of $S$ then this gives rise to a $z$-flower of order $t=|T|/2$: Clearly, $|T|$ must be even to allow for the partition into sets of size two. Moreover, the paths are vertex-disjoint and, thus, two paths from $N(z)$ ending in $\{s_i,t_i\}\in S$ can be combined, using that $\{s_i,t_i\}$ must be an edge of $G$ into a single path, say $P_i$, from $N(z)$ to $N(z)$ that contains at least one edge of $S$. Note that, because $s_i$ and $t_i$ are ends of two paths in the packing they cannot occur in any other paths, so this combination still yields vertex-disjoint paths in $G-z$. Finally, adding the vertex $z$, the paths $P_1,\ldots,P_t$ can be combined into $t$ $S$-cycles that intersect only in $z$.

Thus, the task of finding a $z$-flower of maximum order reduces to that of solving a matroid parity problem on a gammoid: The underlying graph is $G-z$, the source set is $N_G(z)$, the ground set is $V(S)$, and the pairs are given by $S$. Recall that pairs in $S$ are vertex-disjoint. Using the algorithm due to Lov\'asz ~\cite{Lovasz80}, one may find a maximum independent set composed of pairs in $S$ in polynomial time, when provided with a matrix representation for the gammoid. A small caveat would be that one would need a randomized algorithm for finding said representation. Conveniently, specialized deterministic algorithms exist for subclasses of linear matroids; we can use a deterministic algorithm due to Tong et al.~\cite{TongLV1984} that solves the problem by reduction to weighted matching on graphs. (Note that given a maximum independent set $T$ composed of pairs, the cycles of the $z$-flower can be found by simple disjoint paths computation for $N(z)$ to $T$ in $G-z$.)
\end{proof} 

It remains to show that we can apply Rules \ref{rule::addpair1} through \ref{rule::useless2} in polynomial time.
\begin{lemma}
    We can apply Rule \ref{rule::addpair1} and \ref{rule::useless1} in polynomial time.
\end{lemma}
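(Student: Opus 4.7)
The plan is straightforward: both rules reduce to simple enumerative checks on the auxiliary graph $H_Z^+$ and the matching $M$, all of which have polynomial size in the input. First I would argue that the data structures are cheap to build. Given the current instance $(G,S,\mathcal{P},k)$ together with the approximate solution $Z$, one computes the bubbles $\mathcal{D}_Z$ as the connected components of $G-Z-S$ in linear time, then scans $S$ to record for each $S$-edge the pair of bubbles it connects, yielding $H_Z$; the extended graph $H_Z^+$ is obtained by adding $Z$ and, for each $(z,I) \in Z \times \mathcal{D}_Z$, inserting the edge $\{z,I\}$ whenever $E(z,V_I) \neq \emptyset$. A maximal matching $M$ in $H_Z$ that covers all inner bubbles is then produced greedily in polynomial time (by first matching each inner bubble and then extending maximally on the remaining edges).

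For Rule \ref{rule::addpair1} I would iterate over every unordered pair $\{x,y\} \subseteq Z$ with $x \neq y$ --- at most $\binom{|Z|}{2} \in \Oh(k^2)$ pairs --- and for each one sweep through the edges of $M$, counting the edges $\{I,J\} \in M$ for which $\{x,I\},\{y,J\} \in E(H_Z^+)$ or $\{x,J\},\{y,I\} \in E(H_Z^+)$. Once the counter reaches $k+2$, the pair is inserted into $\mathcal{P}$ and the sweep is aborted. The total cost is $\Oh(|Z|^2 \cdot |M|)$ adjacency lookups in $H_Z^+$, which is polynomial.

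For Rule \ref{rule::useless1} I would iterate over the edges $e=\{I,J\} \in M$. For each such $e$, I would first scan $Z$ to decide whether there is a single vertex $z \in Z$ with $\{z,I\},\{z,J\} \in E(H_Z^+)$; if so the edge is skipped. Otherwise I enumerate all pairs $\{x,y\}$ of distinct vertices in $Z$ that $e$ sees (at most $|Z|^2$ of them) and check, via a lookup table representing $\mathcal{P}$ as an edge set, whether every such pair already lies in $\mathcal{P}$. If all seen pairs are pair-constraints we remove $e_S$ from $S$ and $e$ from $M$ and report that Rule \ref{rule::useless1} applies; otherwise we try the next edge. Each edge is processed in $\Oh(|Z|^2)$ time so a full scan of $M$ is polynomial.

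The only step requiring a moment of care is the upkeep of $M$ after Rule \ref{rule::useless1} fires: removing $e$ from $M$ may destroy maximality or the property of covering all inner bubbles, but this is easily repaired before the next rule check by extending $M$ greedily on the merged graph, again in polynomial time. The main obstacle --- to the extent there is one --- is simply making sure that the "sees a pair'' and "sees a single vertex'' conditions are tested consistently against the current $H_Z^+$, which is guaranteed by rebuilding the adjacency lists of $H_Z^+$ whenever $\mathcal{D}_Z$ or $E_{\mathcal{D}_Z}$ changes. Hence each single invocation of Rule \ref{rule::addpair1} or Rule \ref{rule::useless1} takes time polynomial in the input size.
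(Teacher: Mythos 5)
Your proof is correct and takes essentially the same approach as the paper: both are direct enumerative algorithms that, for each edge of $M$, test against $H_Z^+$ which single vertices and pairs in $Z$ it sees, then count (for Rule 7) or flag (for Rule 8), giving $\Oh(|M|\cdot|Z|^2)$ adjacency lookups overall. The only difference is loop ordering (you iterate over pairs and sweep $M$; the paper stores seen pairs per edge and then aggregates), which does not change the argument; note also that the paper writes $E(H_Z)$ where it clearly means $E(H_Z^+)$, and you have this detail right.
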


\begin{proof}
    First of all we store for each edge $e=\{I,J\} \in M$ all vertices $z \in Z$ seen by edge $e$ and all pairs $\{x,y\}$ with $x,y \in Z$ seen by edge $e$. For each edge we need at most $\Oh(|Z|^2)$ time; we only have to test for each vertex $z \in Z$ respectively each pair $\{x,y\}$ with $x,y \in Z$ whether $\{I,z\},\{J,z\} \in E(H_Z)$ respectively $\{I,x\},\{J,y\} \in E(H_Z)$ or $\{I,y\},\{J,x\} \in E(H_Z)$.
    Next we count how many edges see a pair $\{x,y\}$ with $x,y \in Z$ and denote this value by $c_{\{x,y\}}$. It takes at most $\Oh(|E| |Z|^2)$ time to compute all values; we only have to count for how many edges we store a certain pair.
    If a counter $c_{\{x,y\}}$ has value at least $k+2$, then we add the pair $\{x,y\}$ to the set $\cP$ of pair-constraints. We can check this for all counters in $\Oh(|Z|^2)$ time.
    The above computation corresponds to the computation we need for Rule \ref{rule::addpair1}.
    To apply Rule \ref{rule::useless1} we only have to look at all vertices and pairs that we stored for an edge $e \in M$. If we have stored no single vertex and only pairs that are pair-constraints in $\cP$, then $e$ fulfills the conditions of an edge that we delete in Rule \ref{rule::useless1}. To check this for one edge takes at most $\Oh(|Z|^2)$ time.
\end{proof}

We prove that we can apply Rule \ref{rule::addpair2} and \ref{rule::useless2} in polynomial time similar to how we prove that we can apply Rule \ref{rule::addpair1} and \ref{rule::useless1} in polynomial time. 
We only have to remember which endpoint is adjacent to which vertex in a pair.

\begin{lemma}
    We can apply Rule \ref{rule::addpair2} and \ref{rule::useless2} in polynomial time.
\end{lemma}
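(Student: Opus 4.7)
The plan is to adapt, almost verbatim, the counting procedure used in the previous lemma, with the one new ingredient that we must remember \emph{which} endpoint of each edge is adjacent to \emph{which} vertex of a seen pair, since in Rule \ref{rule::addpair2} the roles of the two endpoints are asymmetric ($x\in Z\cup B$ must be adjacent to the $\boldsymbol{I}$-endpoint and $y\in Z$ to the $L$-endpoint).

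First I would enumerate all edges $e=\{I,J\}$ of $H_{Z\cup B}$ with $I\in\boldsymbol{I}$ and $J\in L$. For each such $e$ I would compute and store: (i) every single vertex $z\in Z\cup B$ with $\{I,z\},\{J,z\}\in E(H_{Z\cup B}^+)$, and (ii) every \emph{ordered} pair $(x,y)\in(Z\cup B)\times Z$ with $x\ne y$, $\{I,x\}\in E(H_{Z\cup B}^+)$, and $\{J,y\}\in E(H_{Z\cup B}^+)$. Each adjacency test costs constant time, so this precomputation uses $\Oh((|Z|+|B|)\cdot|Z|)$ time per edge, which is polynomial overall since $|B|,|Z|\in\Oh(k^2)$.

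Next I would maintain a counter $c_{(x,y)}$ for each ordered pair and, in a single pass over the edges, increment $c_{(x,y)}$ every time $(x,y)$ appears in the list stored for some edge. Any counter that reaches value $k+2$ witnesses the hypothesis of Rule \ref{rule::addpair2}, so the corresponding unordered pair $\{x,y\}$ is added to $\mathcal{P}$. Scanning all counters is polynomial in the input size.

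Finally, for Rule \ref{rule::useless2} I would iterate once more over the edges $e=\{I,J\}$ and, using the lists already stored, test whether $e$ sees no single vertex of $Z\cup B$ and whether every ordered pair $(x,y)$ it sees corresponds, after forgetting the orientation, to a pair-constraint in $\mathcal{P}$. For each edge passing the test I delete $e_S$ from $S$, remove $J$ from $L$, and replace $I$ by $I\cup J$ in $\boldsymbol{I}$; the structural updates are linear per application. There is no genuine algorithmic obstacle here, only the bookkeeping subtlety of keeping the orientation of each stored pair straight when comparing against the unordered pair-constraints, which the ordered-pair representation handles cleanly.
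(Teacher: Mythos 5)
Your proposal is correct and follows essentially the same approach as the paper's own proof: store, per edge, the single vertices seen and the \emph{ordered} pairs $(x,y)$ seen (recording which endpoint is adjacent to $x$ and which to $y$), count occurrences of each ordered pair, and then scan counters and stored lists for Rules \ref{rule::addpair2} and \ref{rule::useless2}. The key bookkeeping observation you highlight — that pairs must be stored with orientation, unlike in the analogous lemma for Rules \ref{rule::addpair1} and \ref{rule::useless1} — is exactly the point the paper also emphasizes.
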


\begin{proof}
    First of all we store for each edge $e=\{I,J\}$ with $I \in \boldsymbol{I}$, $J \in L$ all vertices $z \in Z$ seen by edge $e$ and all pairs $(x,y)$ with $x \in Z \cup B$ adjacent to $I$, $y \in Z$ adjacent to $J$ such that $e$ sees the pair $\{x,y\}$. For each edge $e=\{I,J\}$ with $I \in \boldsymbol{I}$, $J \in L$ we need at most $\Oh(|Z \cup B||Z|)$ time; we only have to test for each vertex $z \in Z$ respectively each pair $(x,y)$ with $x \in Z \cup B$, $y \in Z$ whether $\{I,z\},\{J,z\} \in E(H_{Z \cup B})$ respectively $\{I,x\},\{J,y\} \in E(H_{Z \cup B})$.
    Next we count for how many edges we stored the pair $(x,y)$ with $x \in Z \cup B$, $y \in Z$ and denote this value by $c_{(x,y)}$. It takes at most $\Oh(|E| |Z \cup B| |Z|)$ time to compute all values; we only have to count for how many edges we store a certain pair.
    If a counter $c_{(x,y)}$ has value at least $k+2$, then we add the pair $\{x,y\}$ to the set $\cP$ of pair-constraints. We can check this for all counters in $\Oh(|Z \cup B||Z|)$ time.
    The above computation corresponds to the computation we need for Rule \ref{rule::addpair2}, because we only store the pair $(x,y)$ for an edge if the edge sees the pair $\{x,y\}$.
    To apply Rule \ref{rule::useless2} we only have to look at all vertices and pairs that we stored for an edge $e$ between bubbles in $\boldsymbol{I}$ and bubbles in $L$. If we have stored no single vertex and only pairs $(x,y)$ such that $\{x,y\}$ is a pair-constraints in $\cP$, then $e$ fulfills the conditions of an edge that we delete in Rule \ref{rule::useless2}. To check this for one edge takes at most $\Oh(|Z \cup B||Z|)$ time.
\end{proof}

Finally, we show that we can compute the matching $M$ and the set $B$ in polynomial time.

\begin{lemma}
    We can compute a maximal matching $M$ in $H_Z$ that covers all inner bubbles in polynomial time.
\end{lemma}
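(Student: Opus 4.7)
The plan is to exploit the structural fact that $H_Z$ is a forest (recalled from earlier in the paper: any cycle in $H_Z$ would lift to an $S$-cycle in $G-Z$, contradicting that $Z$ is a solution). I will process each connected component $T$ of $H_Z$ independently and combine the resulting matchings. The trivial components are dealt with immediately: a solitary bubble contributes nothing, and a two-vertex component (necessarily a $K_2$ of two leaf bubbles, as highlighted right after the definition of inner/leaf bubbles) contributes its unique edge; neither component contains an inner bubble. For every remaining tree $T$ with at least three vertices, the inner bubbles of $H_Z$ inside $T$ coincide exactly with the non-leaves of $T$, so the task reduces to producing a matching of $T$ that saturates every non-leaf.

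For this I would run a standard linear-time tree dynamic program. Root $T$ at an arbitrary vertex and, in post-order, compute for each $v$ two Booleans $f(v,0), f(v,1)$ asserting that the subtree rooted at $v$ admits a matching saturating every non-leaf-of-$T$ strictly below $v$, with $v$ itself left unmatched (state $0$) or matched to one of its children (state $1$). The recurrences are routine: each child $c$ contributes $\min(f(c,0),f(c,1))$ if $c$ is a leaf of $T$ and $f(c,1)$ if $c$ is a non-leaf of $T$; in state $1$ a distinguished child supplies $v$'s partner via $f(c,0)$. Back-pointers then extract an actual matching. Taking the union of these matchings across all trees and finally scanning the remaining edges of $H_Z$ greedily to add any edge whose endpoints are both still free yields a maximal matching $M$ of $H_Z$ that covers every inner bubble.

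The main obstacle is to argue that the DP always succeeds, i.e., that every tree admits a matching covering its non-leaves. I would establish this by strong induction on $|V(T)|$ using a longest leaf-to-leaf path $l_1 = v_0 - v_1 - \cdots - v_k = l_2$: by maximality of the path, every neighbor of $v_1$ off the path is itself a leaf of $T$, so it is safe to commit $\{l_1, v_1\}$ to the matching and then walk along the maximal induced chain of degree-two vertices starting at $v_1$, pairing them consecutively so that each such non-leaf is saturated, before applying the induction hypothesis to the smaller forest that remains. All remaining bookkeeping is polynomial: constructing $H_Z$, running the DP on each tree, and performing the greedy extension together run in time linear in $|V(H_Z)|+|E(H_Z)|$, so overall $M$ is produced in polynomial time, as claimed.
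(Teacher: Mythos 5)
Your proof is correct, but it takes a noticeably heavier route than the paper's. The paper simply runs a top-down recursive greedy on each tree of $H_Z$: root the tree, match the root $r$ to one child $v$, and recurse on each non-leaf child of $r$ and of $v$; the correctness (all non-leaves saturated, and maximality) falls out directly from the recursion and the tree structure. You instead set up a two-state tree DP that tests \emph{feasibility} of a matching saturating the non-leaves, separately prove that such a matching always exists (by induction on a longest leaf-to-leaf path), and then run a final greedy pass to ensure maximality. All three pieces are sound, but two of them are avoidable: once you have a matching saturating every non-leaf of a tree $T$ with $|V(T)|\ge 3$, it is automatically maximal (every edge of a tree on $\ge 3$ vertices has at least one internal endpoint, which is already saturated), so the closing greedy extension never adds an edge; and the paper's greedy is itself a constructive proof of existence, making the feasibility DP redundant. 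What your version buys is a more transparent, stand-alone existence argument, which is genuinely the crux that the paper dismisses with ``it can be easily shown.'' Two small wrinkles: your recurrence should use $\max$ (or logical OR) rather than $\min$ for leaf children, since for a leaf $c$ either state $f(c,0)$ or $f(c,1)$ is acceptable; and the longest-path induction needs the usual extra care when $\deg_T(v_2)=2$ (removing $v_1$ would demote $v_2$ to a leaf of the residual tree), which your ``pair consecutively along the degree-two chain'' step is designed to handle but would be cleaner with a strengthened induction hypothesis that lets you prescribe one vertex the matching must saturate.
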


\begin{proof}
    We prove the lemma by giving a simple greedy algorithm for this problem. Let $T$ be a connected component in $H_Z$. Since $H_Z$ is a forest, $T$ is a tree; take $T$ to be rooted in an arbitrary vertex $r$.
    \begin{algorithm}[H]
    \caption{Matching}\label{alg::matching}
    \begin{algorithmic}[1]
      \Require{A tree $T$ with root $r$.}
      \Ensure{A maximal matching $M$ that covers all inner bubbles in $T$.}
      \State let $v$ be a child of $r$ in $T$
      \State $M\leftarrow \{\{r,v\}\}$
      \ForAll{children $w$ of $v$ and $r$ with $w \neq v$ }
        \If{$w$ is no leaf in $T$}
            \State let $T_w$ be the subtree of $T$ rooted at $w$
            \State $M\leftarrow M \cup$ \Call{Matching}{$T_w,w$}
        \EndIf
      \EndFor\\
      \Return $M$
    \end{algorithmic}
    \end{algorithm}
    It can be easily shown that $M$ is an maximal matching in $T$ that covers all inner bubbles, because $T$ is a tree. Since we can apply this algorithm to every connected component in $H_Z$ in polynomial time, the union of all matchings leads to the required maximal matching.
\end{proof}

It remains to show that we can find $B$ in polynomial time. From Schrijver's proof of Theorem 
\ref{theorem::gallai} and the proof of Lemma \ref{lemma::z-flower} it follows that we can 
find in polynomial time either a $z$-flower of order $(k+1)$ or a $z$-blocker of size at most $2k$ in $G_z$.
Since there exists no $z$-flower in $G_z$ when we compute $B$, we compute for every $z$ exactly once the set $B_z$ and since $B$ is simply the union of all $z$-blockers we can compute $B$ in polynomial time.

\subparagraph{Finding an equivalent instance for Edge Subset Feedback Vertex Set.}
Up to now we can only bound the number of edges in $S$ for the \PESFVS problem. As mentioned above the instance $(G,S,\mathcal{P}=\emptyset,k)$ for \PESFVS is equivalent to the instance $(G,S,k)$ of \ESFVS. Therefore we only have to show that we can find in polynomial time an instance of \ESFVS that is equivalent to the instance $(G,S,\mathcal{P},k)$ of \PESFVS and has at most $\Oh(k^4)$ $S$-edges. 
Let $\{x,y\} \in \cP$ be a pair-constraint. If there are two edges between $x$ and $y$ of which at least one is contained in $S$, then $x$ or $y$ must be in any solution, because $xy$ is an $S$-cycle.
For this reason, the instance $(G',S'=S \cup \mathcal{P},k)$ of \ESFVS is equivalent to the instance $(G,S,\mathcal{P},k)$ of \PESFVS, where $G'$ is created from $G$ by adding one edge $\{x,y\}$ between every two vertices $x$ and $y$ with $\{x,y\} \in \mathcal{P}$ when $\{x,y\} \notin E$ and by adding an edge $\{x,y\}$ between $x$ and $y$ that is also contained in $S'$; hence there are two edges between $x$ and $y$ with $\{x,y\} \in \cP$ in graph $G'$ and we add exactly one edge between $x$ and $y$ to $S'$. 
Because we cannot apply Rule \ref{rule::pair1} or \ref{rule::pair2} to $(G,S,\mathcal{P},k)$, we know that $|\mathcal{P}| \leq k^2$. This leads to a bound of $|S|+|\mathcal{P}| \in \Oh(k^4)$ edges in $S'$ for the \ESFVS problem after the reduction.

Finally, we combine the results of Section \ref{section:kernelization} and Section \ref{section:reducing} to obtain a polynomial kernel for \ESFVS parameterized by $k$. Let us first make some comments about the reduction of the size of $S$ and the kernelization:
For the reduction of the size of $S$ we use the fact that we can always find a solution that is disjoint from $T$. This only holds because we modified the graph accordingly.
But since this is a correct reduction it holds that an input instance $(G,S,k)$ of \ESFVS has a solution if and only if the output instance $(G',S',k')$ of the reduction in Section \ref{section:reducing} has a solution.
Thus it is no problem that we consider dominant solutions for the kernelization in Section \ref{section:kernelization} and that the kernelization only guarantees the preservation of dominant solutions. Every instance $(G',S',k')$ has a dominant solution of size at most $k'$ when a solution of size at most $k'$ exists; remember that $X$ is a dominant solution for $(G',S',k')$ if it has minimum size and contains a maximal number of vertices from $T'$ among solutions of minimum size. Hence if $(G',S',k')$ has a solution then it has a dominant solution $X$ and $X$ is a dominant solution for $(G',S',k')$ if and only if $X$ is a dominant solution for $(G'',S',k')$ the output instance of the kernelization in Section \ref{section:kernelization}.

Summarized, the reduction of the number of edges in $S$ to $\Oh(k^4)$ edges together with the kernelization to $\Oh(|S|^2k)$ vertices for \ESFVS parameterized by $|S|$ and $k$, results in a kernelized instance with $\Oh(k^9)$ vertices for \ESFVS parameterized by $k$.


\section{Conclusions}\label{section:conclusion}

We have shown that the \SFVS problem has a randomized polynomial kernelization using the matroid-based tools of Kratsch and Wahlstr\"om~\cite{DBLP:conf/focs/KratschW12}, positively answering the question of Cygan et al.~\cite{DBLP:journals/siamdm/CyganPPW13}. As in previous work~\cite{DBLP:conf/focs/KratschW12} the error-probability can be made exponentially small without increasing the kernel size. Nevertheless, it would of course be very interesting whether the use of randomization and/or matroids can be avoided. Furthermore, there is quite a gap between $\Oh(k^9)$ vertices and a lower bound of \emph{size} $\Oh(k^{2-\varepsilon})$ that is inherited from \problem{Vertex Cover}~\cite{DellM14}, conditioned on non-collapse of the polynomial hierarchy.

Other open problems regarding existence of polynomial kernels, possibly amenable to the matroid tools, are \MWC and \problem{Directed Feedback Vertex Set} (\DFVS). There is also a directed version of \SFVS, called \problem{Directed Subset Feedback Vertex Set}, but it generalizes \DFVS, whose kernel status has remained open for quite some time now.


\bibliography{lit}

\begin{thebibliography}{10}

\bibitem{BurrageEFLMR06}
Kevin Burrage, Vladimir Estivill{-}Castro, Michael~R. Fellows, Michael~A.
  Langston, Shev Mac, and Frances~A. Rosamond.
\newblock The undirected feedback vertex set problem has a poly(\emph{k})
  kernel.
\newblock In {\em IWPEC 2006}, volume 4169 of {\em LNCS}, pages 192--202.
  Springer, 2006.

\bibitem{CyganNPPRW11}
Marek Cygan, Jesper Nederlof, Marcin Pilipczuk, Michal Pilipczuk, Johan M.~M.
  van Rooij, and Jakub~Onufry Wojtaszczyk.
\newblock Solving connectivity problems parameterized by treewidth in single
  exponential time.
\newblock In {\em FOCS 2011}, pages 150--159. {IEEE} Computer Society, 2011.

\bibitem{DBLP:journals/siamdm/CyganPPW13}
Marek Cygan, Marcin Pilipczuk, Michal Pilipczuk, and Jakub~Onufry Wojtaszczyk.
\newblock Subset feedback vertex set is fixed-parameter tractable.
\newblock {\em {SIAM} J. Discrete Math.}, 27(1):290--309, 2013.

\bibitem{DellM14}
Holger Dell and Dieter van Melkebeek.
\newblock Satisfiability allows no nontrivial sparsification unless the
  polynomial-time hierarchy collapses.
\newblock {\em J. {ACM}}, 61(4):23:1--23:27, 2014.

\bibitem{Diestel2005graph}
Reinhard Diestel.
\newblock Graph theory (graduate texts in mathematics).
\newblock 2005.

\bibitem{DBLP:journals/siamcomp/EvenNZ00}
Guy Even, Joseph Naor, and Leonid Zosin.
\newblock An 8-approximation algorithm for the subset feedback vertex set
  problem.
\newblock {\em {SIAM} J. Comput.}, 30(4):1231--1252, 2000.

\bibitem{FominLS14}
Fedor~V. Fomin, Daniel Lokshtanov, and Saket Saurabh.
\newblock Efficient computation of representative sets with applications in
  parameterized and exact algorithms.
\newblock In {\em {SODA} 2014}, pages 142--151. {SIAM}, 2014.

\bibitem{gallai1961maximum}
Tibor Gallai.
\newblock Maximum-minimum s{\"a}tze und verallgemeinerte faktoren von graphen.
\newblock {\em Acta Mathematica Hungarica}, 12(1-2):131--173, 1961.

\bibitem{DBLP:journals/jct/KawarabayashiK12}
Ken{-}ichi Kawarabayashi and Yusuke Kobayashi.
\newblock Fixed-parameter tractability for the subset feedback set problem and
  the s-cycle packing problem.
\newblock {\em J. Comb. Theory, Ser. {B}}, 102(4):1020--1034, 2012.

\bibitem{KociumakaP14}
Tomasz Kociumaka and Marcin Pilipczuk.
\newblock Faster deterministic feedback vertex set.
\newblock {\em Inf. Process. Lett.}, 114(10):556--560, 2014.

\bibitem{DBLP:conf/focs/KratschW12}
Stefan Kratsch and Magnus Wahlstr{\"{o}}m.
\newblock Representative sets and irrelevant vertices: New tools for
  kernelization.
\newblock In {\em FOCS 2012}, pages 450--459. {IEEE} Computer Society, 2012.

\bibitem{DBLP:conf/icalp/LokshtanovRS15}
Daniel Lokshtanov, M.~S. Ramanujan, and Saket Saurabh.
\newblock Linear time parameterized algorithms for subset feedback vertex set.
\newblock In {\em ICALP 2015}, volume 9134 of {\em LNCS}, pages 935--946.
  Springer, 2015.

\bibitem{lovasz1977flats}
L{\'a}szl{\'o} Lov{\'a}sz.
\newblock Flats in matroids and geometric graphs.
\newblock {\em Combinatorial surveys}, pages 45--86, 1977.

\bibitem{Lovasz80}
L{\'{a}}szl{\'{o}} Lov{\'{a}}sz.
\newblock Matroid matching and some applications.
\newblock {\em J. Comb. Theory, Ser. {B}}, 28(2):208--236, 1980.

\bibitem{DBLP:journals/tcs/Marx09}
D{\'{a}}niel Marx.
\newblock A parameterized view on matroid optimization problems.
\newblock {\em Theor. Comput. Sci.}, 410(44):4471--4479, 2009.

\bibitem{DBLP:journals/corr/abs-1110-4765}
D{\'{a}}niel Marx, Barry O'Sullivan, and Igor Razgon.
\newblock Finding small separators in linear time via treewidth reduction.
\newblock {\em {ACM} Transactions on Algorithms}, 9(4):30, 2013.

\bibitem{Perfect1968}
Hazel Perfect.
\newblock Applications of {M}enger's graph theorem.
\newblock {\em J. Math. Anal. Appl.}, 22:96--111, 1968.

\bibitem{DBLP:journals/jct/Schrijver01}
Alexander Schrijver.
\newblock A short proof of {M}ader's sigma-paths theorem.
\newblock {\em J. Comb. Theory, Ser. {B}}, 82(2):319--321, 2001.

\bibitem{DBLP:journals/talg/Thomasse10}
St{\'{e}}phan Thomass{\'{e}}.
\newblock A 4\emph{k}\({}^{\mbox{2}}\) kernel for feedback vertex set.
\newblock {\em {ACM} Transactions on Algorithms}, 6(2), 2010.

\bibitem{TongLV1984}
Po~Tong, Eugene~L. Lawler, and Vijay~V. Vazirani.
\newblock {\em Solving the weighted parity problem for gammoids by reduction to
  graphic matching}.
\newblock Computer Science Division, University of California, 1982.

\bibitem{DBLP:conf/soda/Wahlstrom14}
Magnus Wahlstr{\"{o}}m.
\newblock Half-integrality, lp-branching and {FPT} algorithms.
\newblock In {\em SODA 2014}, pages 1762--1781. {SIAM}, 2014.

\end{thebibliography}


\end{document}